\tikzstyle{none}=[inner sep=0pt]
\definecolor{hexcolor0xff0000}{rgb}{1.000,0.000,0.000}
\definecolor{hexcolor0x000000}{rgb}{0.000,0.000,0.000}
\definecolor{hexcolor0x00ff00}{rgb}{0.000,1.000,0.000}
\definecolor{hexcolor0x000000}{rgb}{0.000,0.000,0.000}
\definecolor{hexcolor0xffff00}{rgb}{1.000,1.000,0.000}
\definecolor{hexcolor0xffffff}{rgb}{1.000,1.000,1.000}
\tikzstyle{rn}=[circle,fill=hexcolor0xff0000,draw=hexcolor0x000000,line width=0.8 pt]
\tikzstyle{gn}=[circle,fill=hexcolor0x00ff00,draw=hexcolor0x000000,line width=0.8 pt]
\tikzstyle{yn}=[circle,fill=hexcolor0xffff00,draw=hexcolor0x000000,line width=0.8 pt]
\tikzstyle{wn}=[circle,fill=hexcolor0xffffff,draw=hexcolor0x000000,line width=0.8 pt]
\tikzstyle{wnthick}=[circle,fill=hexcolor0xffffff,draw=hexcolor0x000000,line width=2.500]
\tikzstyle{simple}=[-,draw=hexcolor0x000000,line width=2.000]
\tikzstyle{arrow}=[-,draw=hexcolor0x000000,postaction={decorate},decoration={markings,mark=at position .5 with {\arrow{>}}},line width=2.000]
\tikzstyle{tick}=[-,draw=hexcolor0x000000,postaction={decorate},decoration={markings,mark=at position .5 with {\draw (0,-0.1) -- (0,0.1);}},line width=2.000]
\tikzstyle{halfthickness}=[-,draw=hexcolor0x000000,line width=0.500]
\tikzstyle{thick}=[-,draw=hexcolor0x000000,line width=2.500]
\tikzstyle{thicker}=[-,draw=hexcolor0x000000,line width=4.000]
\tikzstyle{env}=[copoint,regular polygon rotate=0,minimum width=0.2cm, fill=black]
\tikzstyle{probs}=[shape=semicircle,fill=white,draw=black,shape border rotate=180,minimum width=1.2cm]
\tikzstyle{every picture}=[baseline=-0.25em,scale=0.5]
\tikzstyle{dotpic}=[] 
\tikzstyle{diredges}=[every to/.style={diredge}]
\tikzstyle{math matrix}=[matrix of math nodes,left delimiter=(,right delimiter=),inner sep=2pt,column sep=1em,row sep=0.5em,nodes={inner sep=0pt},text height=1.5ex, text depth=0.25ex]
\tikzstyle{inline text}=[text height=1.5ex, text depth=0.25ex,yshift=0.5mm]
\tikzstyle{label}=[font=\footnotesize,text height=1.5ex, text depth=0.25ex,yshift=0.5mm]
\tikzstyle{left label}=[label,anchor=east,xshift=1.5mm]
\tikzstyle{right label}=[label,anchor=west,xshift=-1.5mm]
\tikzstyle{braceedge}=[decorate,decoration={brace,amplitude=2mm,raise=-1mm}]
\tikzstyle{small braceedge}=[decorate,decoration={brace,amplitude=1mm,raise=-1mm}]
\tikzstyle{doubled}=[line width=1.6pt] 
\tikzstyle{boldedge}=[doubled,shorten <=-0.17mm,shorten >=-0.17mm]
\tikzstyle{boldedgegray}=[doubled,gray,shorten <=-0.17mm,shorten >=-0.17mm]
\tikzstyle{semidoubled}=[line width=1.4pt] 
\tikzstyle{semiboldedgegray}=[semidoubled,gray,shorten <=-0.17mm,shorten >=-0.17mm]
\tikzstyle{boldedgedashed}=[very thick,dashed,shorten <=-0.17mm,shorten >=-0.17mm]
\tikzstyle{vboldedgedashed}=[doubled,dashed,shorten <=-0.17mm,shorten >=-0.17mm]
\tikzstyle{left hook arrow}=[left hook-latex]
\tikzstyle{right hook arrow}=[right hook-latex]
\tikzstyle{sembracket}=[line width=0.5pt,shorten <=-0.07mm,shorten >=-0.07mm]
\tikzstyle{causal edge}=[->,thick,gray]
\tikzstyle{causal nondir}=[thick,gray]
\tikzstyle{timeline}=[thick,gray, dashed]
\tikzstyle{cedge}=[<->,thick,gray!70!white]
\tikzstyle{empty diagram}=[draw=gray!40!white,dashed,shape=rectangle,minimum width=1cm,minimum height=1cm]
\tikzstyle{empty diagram small}=[draw=gray!50!white,dashed,shape=rectangle,minimum width=0.6cm,minimum height=0.5cm]
\tikzstyle{dot}=[inner sep=0mm,minimum width=2mm,minimum height=2mm,draw,shape=circle]
\tikzstyle{ddot}=[inner sep=0mm, doubled, minimum width=2.5mm,minimum height=2.5mm,draw,shape=circle]
\tikzstyle{black dot}=[dot,fill=black]
\tikzstyle{white dot}=[dot,fill=white,,text depth=-0.2mm]
\tikzstyle{green dot}=[white dot] 
\tikzstyle{gray dot}=[dot,fill=gray!40!white,,text depth=-0.2mm]
\tikzstyle{red dot}=[gray dot] 
\tikzstyle{black ddot}=[ddot,fill=black]
\tikzstyle{white ddot}=[ddot,fill=white]
\tikzstyle{gray ddot}=[ddot,fill=gray!40!white]
\tikzstyle{gray edge}=[gray!40!white]
\tikzstyle{small dot}=[inner sep=0.5mm,minimum width=0pt,minimum height=0pt,draw,shape=circle]
\tikzstyle{small black dot}=[small dot,fill=black]
\tikzstyle{small white dot}=[small dot,fill=white]
\tikzstyle{small gray dot}=[small dot,fill=gray!40!white]
\tikzstyle{causal dot}=[inner sep=0.4mm,minimum width=0pt,minimum height=0pt,draw=white,shape=circle,fill=gray!40!white]
\tikzstyle{phase dimensions}=[minimum size=5mm,font=\footnotesize,rectangle,rounded corners=2.5mm,inner sep=0.2mm,outer sep=-2mm]
\tikzstyle{dphase dimensions}=[minimum size=5mm,font=\footnotesize,rectangle,rounded corners=2.5mm,inner sep=0.2mm,outer sep=-2mm]
\tikzstyle{white phase dot}=[dot,fill=white,phase dimensions]
\tikzstyle{white phase ddot}=[ddot,fill=white,dphase dimensions]
\tikzstyle{white rect ddot}=[draw=black,fill=white,doubled,minimum size=5mm,font=\footnotesize,rectangle,rounded corners=2.5mm,inner sep=0.2mm]
\tikzstyle{gray rect ddot}=[draw=black,fill=gray!40!white,doubled,minimum size=6mm,font=\footnotesize,rectangle,rounded corners=3mm]
\tikzstyle{gray phase dot}=[dot,fill=gray!40!white,phase dimensions]
\tikzstyle{gray phase ddot}=[ddot,fill=gray!40!white,dphase dimensions]
\tikzstyle{grey phase dot}=[gray phase dot]
\tikzstyle{grey phase ddot}=[gray phase ddot]
\tikzstyle{small phase dimensions}=[minimum size=4mm,font=\tiny,rectangle,rounded corners=2mm,inner sep=0.2mm,outer sep=-2mm]
\tikzstyle{small dphase dimensions}=[minimum size=4mm,font=\tiny,rectangle,rounded corners=2mm,inner sep=0.2mm,outer sep=-2mm]
\tikzstyle{small gray phase dot}=[dot,fill=gray!40!white,small phase dimensions]
\tikzstyle{small gray phase ddot}=[ddot,fill=gray!40!white,small dphase dimensions]
\tikzstyle{small map}=[draw,shape=rectangle,minimum height=4mm,minimum width=4mm,fill=white]
\tikzstyle{cnot}=[fill=white,shape=circle,inner sep=-1.4pt]
\tikzstyle{asym hadamard}=[fill=white,draw,shape=NEbox,inner sep=0.6mm,font=\footnotesize,minimum height=4mm]
\tikzstyle{asym hadamard conj}=[fill=white,draw,shape=NWbox,inner sep=0.6mm,font=\footnotesize,minimum height=4mm]
\tikzstyle{asym hadamard dag}=[fill=white,draw,shape=SEbox,inner sep=0.6mm,font=\footnotesize,minimum height=4mm]
\tikzstyle{hadamard}=[fill=white,draw,inner sep=0.6mm,font=\footnotesize,minimum height=4mm,minimum width=4mm]
\tikzstyle{small hadamard}=[fill=white,draw,inner sep=0.6mm,minimum height=1.5mm,minimum width=1.5mm]
\tikzstyle{dhadamard}=[hadamard,doubled]
\tikzstyle{small dhadamard}=[small hadamard,doubled]
\tikzstyle{small dhadamard rotate}=[small hadamard,doubled,rotate=45]
\tikzstyle{antipode}=[white dot,inner sep=0.3mm,font=\footnotesize]
\tikzstyle{scalar}=[diamond,draw,inner sep=0.5pt,font=\small]
\tikzstyle{dscalar}=[diamond,doubled, draw,inner sep=0.5pt,font=\small]
\tikzstyle{small box}=[rectangle,inline text,fill=white,draw,minimum height=5mm,yshift=-0.5mm,minimum width=5mm,font=\small]
\tikzstyle{small gray box}=[small box,fill=gray!30]
\tikzstyle{medium box}=[rectangle,inline text,fill=white,draw,minimum height=5mm,yshift=-0.5mm,minimum width=10mm,font=\small]
\tikzstyle{square box}=[small box] 
\tikzstyle{medium gray box}=[small box,fill=gray!30]
\tikzstyle{semilarge box}=[rectangle,inline text,fill=white,draw,minimum height=5mm,yshift=-0.5mm,minimum width=12.5mm,font=\small]
\tikzstyle{large box}=[rectangle,inline text,fill=white,draw,minimum height=5mm,yshift=-0.5mm,minimum width=15mm,font=\small]
\tikzstyle{large gray box}=[small box,fill=gray!30]
\tikzstyle{Bayes box}=[rectangle,fill=black,draw, minimum height=3mm, minimum width=3mm]
\tikzstyle{gray square point}=[small box,fill=gray!50]
\tikzstyle{dphase box white}=[dhadamard]
\tikzstyle{dphase box gray}=[dhadamard,fill=gray!50!white]
\tikzstyle{point}=[regular polygon,regular polygon sides=3,draw,scale=0.75,inner sep=-0.5pt,minimum width=9mm,fill=white,regular polygon rotate=180]
\tikzstyle{copoint}=[regular polygon,regular polygon sides=3,draw,scale=0.75,inner sep=-0.5pt,minimum width=9mm,fill=white]
\tikzstyle{dpoint}=[point,doubled]
\tikzstyle{dcopoint}=[copoint,doubled]
\tikzstyle{wide copoint}=[fill=white,draw,shape=isosceles triangle,shape border rotate=90,isosceles triangle stretches=true,inner sep=0pt,minimum width=1.5cm,minimum height=6.12mm]
\tikzstyle{wide point}=[fill=white,draw,shape=isosceles triangle,shape border rotate=-90,isosceles triangle stretches=true,inner sep=0pt,minimum width=1.5cm,minimum height=6.12mm,yshift=-0.0mm]
\tikzstyle{wide point plus}=[fill=white,draw,shape=isosceles triangle,shape border rotate=-90,isosceles triangle stretches=true,inner sep=0pt,minimum width=1.74cm,minimum height=7mm,yshift=-0.0mm]
\tikzstyle{wide dpoint}=[fill=white,doubled,draw,shape=isosceles triangle,shape border rotate=-90,isosceles triangle stretches=true,inner sep=0pt,minimum width=1.5cm,minimum height=6.12mm,yshift=-0.0mm]
\tikzstyle{wide dcopoint}=[fill=white,doubled,draw,shape=isosceles triangle,shape border rotate=90,isosceles triangle stretches=true,inner sep=0pt,minimum width=1.5cm,minimum height=6.12mm,yshift=-0.0mm]
\tikzstyle{tinypoint}=[regular polygon,regular polygon sides=3,draw,scale=0.55,inner sep=-0.15pt,minimum width=6mm,fill=white,regular polygon rotate=180]
\tikzstyle{white point}=[point]
\tikzstyle{white dpoint}=[dpoint]
\tikzstyle{green point}=[white point] 
\tikzstyle{white copoint}=[copoint]
\tikzstyle{gray point}=[point,fill=gray!40!white]
\tikzstyle{gray dpoint}=[gray point,doubled]
\tikzstyle{red point}=[gray point] 
\tikzstyle{gray copoint}=[copoint,fill=gray!40!white]
\tikzstyle{gray dcopoint}=[gray copoint,doubled]
\tikzstyle{white point guide}=[regular polygon,regular polygon sides=3,font=\scriptsize,draw,scale=0.65,inner sep=-0.5pt,minimum width=9mm,fill=white,regular polygon rotate=180]
\tikzstyle{black point}=[point,fill=black,font=\color{white}]
\tikzstyle{black copoint}=[copoint,fill=black,font=\color{white}]
\tikzstyle{tiny gray point}=[tinypoint,fill=gray!40!white]
\tikzstyle{diredge}=[->]
\tikzstyle{ddiredge}=[<->]
\tikzstyle{rdiredge}=[<-]
\tikzstyle{thickdiredge}=[->, very thick]
\tikzstyle{pointer edge}=[->,very thick,gray]
\tikzstyle{pointer edge part}=[very thick,gray]
\tikzstyle{dashed edge}=[dashed]
\tikzstyle{thick dashed edge}=[very thick,dashed]
\tikzstyle{thick gray dashed edge}=[thick dashed edge,gray!40]
\tikzstyle{thick map edge}=[very thick,|->]
\newcommand{\boxshape}[3]{%
\pgfdeclareshape{#1}{
\inheritsavedanchors[from=rectangle] 
\inheritanchorborder[from=rectangle]
\inheritanchor[from=rectangle]{center}
\inheritanchor[from=rectangle]{north}
\inheritanchor[from=rectangle]{south}
\inheritanchor[from=rectangle]{west}
\inheritanchor[from=rectangle]{east}
\backgroundpath{
\southwest \pgf@xa=\pgf@x \pgf@ya=\pgf@y
\northeast \pgf@xb=\pgf@x \pgf@yb=\pgf@y

\@tempdima=#2
\@tempdimb=#3

\pgfpathmoveto{\pgfpoint{\pgf@xa - 5pt + \@tempdima}{\pgf@ya}}
\pgfpathlineto{\pgfpoint{\pgf@xa - 5pt - \@tempdima}{\pgf@yb}}
\pgfpathlineto{\pgfpoint{\pgf@xb + 5pt + \@tempdimb}{\pgf@yb}}
\pgfpathlineto{\pgfpoint{\pgf@xb + 5pt - \@tempdimb}{\pgf@ya}}
\pgfpathlineto{\pgfpoint{\pgf@xa - 5pt + \@tempdima}{\pgf@ya}}
\pgfpathclose
}
}}
\tikzstyle{cloud}=[shape=cloud,draw,minimum width=1.5cm,minimum height=1.5cm]
\tikzstyle{map}=[draw,shape=NEbox,inner sep=2pt,minimum height=6mm,fill=white]
\tikzstyle{dashedmap}=[draw,dashed,shape=NEbox,inner sep=2pt,minimum height=6mm,fill=white]
\tikzstyle{mapdag}=[draw,shape=SEbox,inner sep=2pt,minimum height=6mm,fill=white]
\tikzstyle{mapadj}=[draw,shape=SEbox,inner sep=2pt,minimum height=6mm,fill=white]
\tikzstyle{maptrans}=[draw,shape=SWbox,inner sep=2pt,minimum height=6mm,fill=white]
\tikzstyle{mapconj}=[draw,shape=NWbox,inner sep=2pt,minimum height=6mm,fill=white]
\tikzstyle{medium map}=[draw,shape=NEbox,inner sep=2pt,minimum height=6mm,fill=white,minimum width=7mm]
\tikzstyle{medium map dag}=[draw,shape=SEbox,inner sep=2pt,minimum height=6mm,fill=white,minimum width=7mm]
\tikzstyle{medium map adj}=[draw,shape=SEbox,inner sep=2pt,minimum height=6mm,fill=white,minimum width=7mm]
\tikzstyle{medium map trans}=[draw,shape=SWbox,inner sep=2pt,minimum height=6mm,fill=white,minimum width=7mm]
\tikzstyle{medium map conj}=[draw,shape=NWbox,inner sep=2pt,minimum height=6mm,fill=white,minimum width=7mm]
\tikzstyle{semilarge map}=[draw,shape=NEbox,inner sep=2pt,minimum height=6mm,fill=white,minimum width=9.5mm]
\tikzstyle{semilarge map trans}=[draw,shape=SWbox,inner sep=2pt,minimum height=6mm,fill=white,minimum width=9.5mm]
\tikzstyle{semilarge map adj}=[draw,shape=SEbox,inner sep=2pt,minimum height=6mm,fill=white,minimum width=9.5mm]
\tikzstyle{semilarge map dag}=[draw,shape=SEbox,inner sep=2pt,minimum height=6mm,fill=white,minimum width=9.5mm]
\tikzstyle{semilarge map conj}=[draw,shape=NWbox,inner sep=2pt,minimum height=6mm,fill=white,minimum width=9.5mm]
\tikzstyle{large map}=[draw,shape=NEbox,inner sep=2pt,minimum height=6mm,fill=white,minimum width=12mm]
\tikzstyle{large map conj}=[draw,shape=NWbox,inner sep=2pt,minimum height=6mm,fill=white,minimum width=12mm]
\tikzstyle{very large map}=[draw,shape=NEbox,inner sep=2pt,minimum height=6mm,fill=white,minimum width=17mm]
\tikzstyle{medium dmap}=[draw,doubled,shape=NEbox,inner sep=2pt,minimum height=6mm,fill=white,minimum width=7mm]
\tikzstyle{medium dmap dag}=[draw,doubled,shape=SEbox,inner sep=2pt,minimum height=6mm,fill=white,minimum width=7mm]
\tikzstyle{medium dmap adj}=[draw,doubled,shape=SEbox,inner sep=2pt,minimum height=6mm,fill=white,minimum width=7mm]
\tikzstyle{medium dmap trans}=[draw,doubled,shape=SWbox,inner sep=2pt,minimum height=6mm,fill=white,minimum width=7mm]
\tikzstyle{medium dmap conj}=[draw,doubled,shape=NWbox,inner sep=2pt,minimum height=6mm,fill=white,minimum width=7mm]
\tikzstyle{semilarge dmap}=[draw,doubled,shape=NEbox,inner sep=2pt,minimum height=6mm,fill=white,minimum width=9.5mm]
\tikzstyle{semilarge dmap trans}=[draw,doubled,shape=SWbox,inner sep=2pt,minimum height=6mm,fill=white,minimum width=9.5mm]
\tikzstyle{semilarge dmap adj}=[draw,doubled,shape=SEbox,inner sep=2pt,minimum height=6mm,fill=white,minimum width=9.5mm]
\tikzstyle{semilarge dmap dag}=[draw,doubled,shape=SEbox,inner sep=2pt,minimum height=6mm,fill=white,minimum width=9.5mm]
\tikzstyle{semilarge dmap conj}=[draw,doubled,shape=NWbox,inner sep=2pt,minimum height=6mm,fill=white,minimum width=9.5mm]
\tikzstyle{large dmap}=[draw,doubled,shape=NEbox,inner sep=2pt,minimum height=6mm,fill=white,minimum width=12mm]
\tikzstyle{large dmap conj}=[draw,doubled,shape=NWbox,inner sep=2pt,minimum height=6mm,fill=white,minimum width=12mm]
\tikzstyle{large dmap trans}=[draw,doubled,shape=SWbox,inner sep=2pt,minimum height=6mm,fill=white,minimum width=12mm]
\tikzstyle{large dmap adj}=[draw,doubled,shape=SEbox,inner sep=2pt,minimum height=6mm,fill=white,minimum width=12mm]
\tikzstyle{large dmap dag}=[draw,doubled,shape=SEbox,inner sep=2pt,minimum height=6mm,fill=white,minimum width=12mm]
\tikzstyle{very large dmap}=[draw,doubled,shape=NEbox,inner sep=2pt,minimum height=6mm,fill=white,minimum width=19.5mm]
\tikzstyle{muxbox}=[draw,shape=rectangle,minimum height=3mm,minimum width=3mm,fill=white]
\tikzstyle{dmuxbox}=[muxbox,doubled]
\tikzstyle{box}=[draw,shape=rectangle,inner sep=2pt,minimum height=6mm,minimum width=6mm,fill=white]
\tikzstyle{dbox}=[draw,doubled,shape=rectangle,inner sep=2pt,minimum height=6mm,minimum width=6mm,fill=white]
\tikzstyle{dmap}=[draw,doubled,shape=NEbox,inner sep=2pt,minimum height=6mm,fill=white]
\tikzstyle{dmapdag}=[draw,doubled,shape=SEbox,inner sep=2pt,minimum height=6mm,fill=white]
\tikzstyle{dmapadj}=[draw,doubled,shape=SEbox,inner sep=2pt,minimum height=6mm,fill=white]
\tikzstyle{dmaptrans}=[draw,doubled,shape=SWbox,inner sep=2pt,minimum height=6mm,fill=white]
\tikzstyle{dmapconj}=[draw,doubled,shape=NWbox,inner sep=2pt,minimum height=6mm,fill=white]
\tikzstyle{ddmap}=[draw,doubled,dashed,shape=NEbox,inner sep=2pt,minimum height=6mm,fill=white]
\tikzstyle{ddmapdag}=[draw,doubled,dashed,shape=SEbox,inner sep=2pt,minimum height=6mm,fill=white]
\tikzstyle{ddmapadj}=[draw,doubled,dashed,shape=SEbox,inner sep=2pt,minimum height=6mm,fill=white]
\tikzstyle{ddmaptrans}=[draw,doubled,dashed,shape=SWbox,inner sep=2pt,minimum height=6mm,fill=white]
\tikzstyle{ddmapconj}=[draw,doubled,dashed,shape=NWbox,inner sep=2pt,minimum height=6mm,fill=white]
\tikzstyle{smap}=[draw,shape=sNEbox,fill=white]
\tikzstyle{smapdag}=[draw,shape=sSEbox,fill=white]
\tikzstyle{smapadj}=[draw,shape=sSEbox,fill=white]
\tikzstyle{smaptrans}=[draw,shape=sSWbox,fill=white]
\tikzstyle{smapconj}=[draw,shape=sNWbox,fill=white]
\tikzstyle{dsmap}=[draw,dashed,shape=sNEbox,fill=white]
\tikzstyle{dsmapdag}=[draw,dashed,shape=sSEbox,fill=white]
\tikzstyle{dsmaptrans}=[draw,dashed,shape=sSWbox,fill=white]
\tikzstyle{dsmapconj}=[draw,dashed,shape=sNWbox,fill=white]
\tikzstyle{mmap}=[draw,shape=mNEbox]
\tikzstyle{mmapdag}=[draw,shape=mSEbox]
\tikzstyle{mmaptrans}=[draw,shape=mSWbox]
\tikzstyle{mmapconj}=[draw,shape=mNWbox]
\tikzstyle{mmapgray}=[draw,fill=gray!40!white,shape=mNEbox]
\tikzstyle{smapgray}=[draw,fill=gray!40!white,shape=sNEbox]
\pgfmathsetmacro{\pgf@shorten@left}{\pgfkeysvalueof{/tikz/shorten left}}
\pgfmathsetmacro{\pgf@shorten@right}{\pgfkeysvalueof{/tikz/shorten right}}
\pgfmathsetmacro{\pgf@shorten@left}{\pgfkeysvalueof{/tikz/shorten left}}
\pgfmathsetmacro{\pgf@shorten@right}{\pgfkeysvalueof{/tikz/shorten right}}
\tikzstyle{kpoint common}=[draw,fill=white,inner sep=1pt,minimum height=4mm]
\tikzstyle{kpoint}=[shape=cornerpoint,shorten left=5pt,kpoint common]
\tikzstyle{kpoint adjoint}=[shape=cornercopoint,shorten left=5pt,kpoint common]
\tikzstyle{kpoint conjugate}=[shape=cornerpoint,shorten right=5pt,kpoint common]
\tikzstyle{kpoint transpose}=[shape=cornercopoint,shorten right=5pt,kpoint common]
\tikzstyle{kpoint symm}=[shape=cornerpoint,shorten left=5pt,shorten right=5pt,kpoint common]
\tikzstyle{black kpoint}=[shape=cornerpoint,shorten left=5pt,kpoint common,fill=black,font=\color{white}]
\tikzstyle{black kpoint adjoint}=[shape=cornercopoint,shorten left=5pt,kpoint common,fill=black,font=\color{white}]
\tikzstyle{black kpointadj}=[shape=cornercopoint,shorten left=5pt,kpoint common,fill=black,font=\color{white}]
\tikzstyle{black dkpoint}=[shape=cornerpoint,shorten left=5pt,kpoint common,fill=black, doubled,font=\color{white}]
\tikzstyle{black dkpoint adjoint}=[shape=cornercopoint,shorten left=5pt,kpoint common,fill=black, doubled,font=\color{white}]
\tikzstyle{black dkpointadj}=[shape=cornercopoint,shorten left=5pt,kpoint common,fill=black, doubled,font=\color{white}]
\tikzstyle{kpointdag}=[kpoint adjoint]
\tikzstyle{kpointadj}=[kpoint adjoint]
\tikzstyle{kpointconj}=[kpoint conjugate]
\tikzstyle{kpointtrans}=[kpoint transpose]
\tikzstyle{big kpoint}=[kpoint, minimum width=1.2 cm, minimum height=8mm, inner sep=4pt, text depth=3mm]
\tikzstyle{wide kpoint}=[kpoint, minimum width=1 cm, inner sep=2pt]
\tikzstyle{wide kpointdag}=[kpointdag, minimum width=1 cm, inner sep=2pt]
\tikzstyle{wide kpointconj}=[kpointconj, minimum width=1 cm, inner sep=2pt]
\tikzstyle{wide kpointtrans}=[kpointtrans, minimum width=1 cm, inner sep=2pt]
\tikzstyle{gray kpoint}=[kpoint,fill=gray!50!white]
\tikzstyle{gray kpointdag}=[kpointdag,fill=gray!50!white]
\tikzstyle{gray kpointadj}=[kpointadj,fill=gray!50!white]
\tikzstyle{gray kpointconj}=[kpointconj,fill=gray!50!white]
\tikzstyle{gray kpointtrans}=[kpointtrans,fill=gray!50!white]
\tikzstyle{gray dkpoint}=[kpoint,fill=gray!50!white,doubled]
\tikzstyle{gray dkpointdag}=[kpointdag,fill=gray!50!white,doubled]
\tikzstyle{gray dkpointadj}=[kpointadj,fill=gray!50!white,doubled]
\tikzstyle{gray dkpointconj}=[kpointconj,fill=gray!50!white,doubled]
\tikzstyle{gray dkpointtrans}=[kpointtrans,fill=gray!50!white,doubled]
\tikzstyle{white label}=[draw,fill=white,rectangle,inner sep=0.7 mm]
\tikzstyle{gray label}=[draw,fill=gray!50!white,rectangle,inner sep=0.7 mm]
\tikzstyle{black label}=[draw,fill=black,rectangle,inner sep=0.7 mm]
\tikzstyle{dkpoint}=[kpoint,doubled]
\tikzstyle{wide dkpoint}=[wide kpoint,doubled]
\tikzstyle{dkpointdag}=[kpoint adjoint,doubled]
\tikzstyle{wide dkpointdag}=[wide kpointdag,doubled]
\tikzstyle{dkcopoint}=[kpoint adjoint,doubled]
\tikzstyle{dkpointadj}=[kpoint adjoint,doubled]
\tikzstyle{dkpointconj}=[kpoint conjugate,doubled]
\tikzstyle{dkpointtrans}=[kpoint transpose,doubled]
\tikzstyle{kscalar}=[kpoint common, shape=EBox, inner xsep=-1pt, inner ysep=3pt,font=\small]
\tikzstyle{kscalarconj}=[kpoint common, shape=WBox, inner xsep=-1pt, inner ysep=3pt,font=\small]
 \tikzstyle{upground}=[circuit ee IEC,thick,ground,rotate=90,scale=2.5]
 \tikzstyle{downground}=[circuit ee IEC,thick,ground,rotate=-90,scale=2.5]
 \tikzstyle{bigground}=[regular polygon,regular polygon sides=3,draw=gray,scale=0.50,inner sep=-0.5pt,minimum width=10mm,fill=gray]
\tikzstyle{arrs}=[-latex,font=\small,auto]
\tikzstyle{arrow plain}=[arrs]
\tikzstyle{arrow dashed}=[dashed,arrs]
\tikzstyle{arrow bold}=[very thick,arrs]
\tikzstyle{arrow hide}=[draw=white!0,-]
\tikzstyle{arrow reverse}=[latex-]
\tikzstyle{cdnode}=[]
\newcommand{\smalldotonly}[1]{%
\,\begin{tikzpicture}[dotpic,yshift=-0.15mm]
\node [#1] (a) at (0,0) {};
\end{tikzpicture}\,}
\newcommand{\smallblackdot}{\smalldotonly{smalldot}\xspace}
\newcommand{\smallwhitedot}{\smalldotonly{small white dot}\xspace}
\newcommand{\smallgraydot}{\smalldotonly{small gray dot}\xspace}
\definecolor{hexcolor0xa9a9a9}{rgb}{0.663,0.663,0.663}
\tikzstyle{GrayLine}=[dashed,draw=hexcolor0xa9a9a9]
\tikzstyle{gray}=[dashed,draw=hexcolor0xa9a9a9]
\theoremstyle{definition}
\newtheorem{theorem}{Theorem}[section]
\newtheorem*{theorem*}{Theorem}
\newtheorem{corollary}[theorem]{Corollary}
\newtheorem{prop}[theorem]{Proposition}
\newtheorem{defn}[theorem]{Definition}
\newtheorem{example*}[theorem]{Example*}
\newtheorem{examples*}[theorem]{Examples*}
\newtheorem{remark*}[theorem]{Remark*}
\DeclareMathOperator{\Tr}{Tr}
\title{Double Dilation $\neq$ Double Mixing}
\author{Maaike Zwart and Bob Coecke \\ \footnotesize University of Oxford\\ \footnotesize \texttt{maaike.zwart@cs.ox.ac.uk - bob.coecke@cs.ox.ac.uk}}
\date{}
\begin{document}
\maketitle

\begin{abstract}
Density operators are one of the key ingredients of quantum theory. They can be constructed in two ways: via a convex sum of `doubled kets' (i.e.~mixing), and by tracing out part of a  `doubled' two-system ket (i.e.~dilation). Both constructions can be iterated, yielding new mathematical species that have already found applications outside physics. However, as we show in this paper, the iterated constructions no longer yield the same mathematical species. Hence, the constructions `mixing' and `dilation' themselves are by no means equivalent. Concretely, when applying the Choi-Jamiolkowski isomorphism to the second iteration, dilation produces arbitrary symmetric bipartite states, while mixing only yields the disentangled ones. All results are proven using diagrams, and hence they hold not only for quantum theory, but also for a much more general class of process theories.
\end{abstract}

\section{Introduction}

In 1932, von Neumann introduced special operators, now called density operators, to describe statistical \em mixtures \em of quantum states \cite{VonNeumann1955}. Unlike classical probability distributions, density operators are able to describe mixtures that involve a superposition of states, making them suitable for quantum theory. von Neumann noticed that these density operators also arise when part of a state describing a composite system is discarded, a.k.a.~\em dilation\em. So two conceptually different physical processes happen to yield the same mathematical species in the quantum formalism, in other words, density operators are two-faced.

Mathematically, the fact that these two faces are distinct shows in the corresponding constructions. Following \cite{Ashoush}, density operators representing statistical mixtures are constructed by first matching each vector (ket) in the mixture with its corresponding functional (bra), turning the vectors into operators. We call this \em doubling \em. Then, these operators are combined in a convex sum, forming the density operator:
\begin{align*}
  \left\{|\phi_i\rangle\right\}_i & \mapsto \left\{|\phi_i\rangle\langle\phi_i|\right\}_i \\
   & \mapsto
   \sum_i \ p_i\  |\phi_i\rangle\langle\phi_i|
\end{align*}
where all $p_i\geq 1$ and  $\sum_i p_i =1$.

Density operators originating from dilation are also constructed by doubling a vector: this time a vector in a space of form $A \otimes B$. Then, part of the resulting operator is traced out, yielding again a density operator:
\begin{align*}
  |\psi_{AB}\rangle & \mapsto  |\psi_{AB}\rangle\langle\psi_{AB}| \\
 & \mapsto Tr_B \left( |\psi_{AB}\rangle\langle\psi_{AB}| \right),
\end{align*}
where $|\psi_{AB}\rangle$ is a vector in Hilbert space $A \otimes B$. As both constructions yield density operators, one is tempted to think of these as equivalent, which indeed many physicists do.

Iterating these constructions yields new mathematical species that were called \emph{dual density operators} in \cite{Ashoush2015, Ashoush}. However, it turns out that the iterated versions of the constructions are no longer equivalent. In fact, as we will prove in sections \ref{section:notthesame} and \ref{section:subspace}, the dual density operators resulting from double mixing form a proper subspace of those resulting from double dilation. Note that we use the term `double' mixing/dilation to emphasise that this procedure involves another round of doubling, hence distinguishing from either (i) mixing further already mixed states, and (ii) discarding a second system after discarding a first, both of which of course still yield ordinary density operators.

Of course, the constructions can be iterated once more, and even more after that. Each iteration yields new mathematical species, and from the second iteration onwards, mixtures always form a proper
subspace of what is obtained by dilation. We illustrate the results of iterating both constructions any finite number of times in section \ref{section:generalisation}, expanding the work in \cite{Ashoush}, which already considered the general case for dilation but not for mixing.

A closer examination of the differences between the results of double mixing and double dilation reveals that double mixing always results in symmetric disentangled states, whereas doubly dilated states are highly symmetrical, but not necessarily disentangled. This difference hints at a possible classification of states that are either doubly mixed or doubly dilated, which would contribute to the characterisation of entangled states started by Horodecki\cite{Horodecki2007}. We make a start of this enterprise in sections \ref{section:characterisation_twicemix} and \ref{section:characterisation_twicedilation}.

In quantum theory, density operators provide enough structure to describe the currently known phenomena, so for physics there seems to be no direct use for the iterated constructions. The only  notable exception known to us is the study of the space of iterated dilated states as a generalised probabilistic theory  by Barnum and Barrett \cite{Barnum-Barrett}. However, recent developments in \em natural language processing \em (NLP) have found an interesting application for these generalised variants of density operators \cite{Ashoush2015, Ashoush}. Density operators first appeared in the NLP literature in \cite{blacoe2013quantum}, where they are mainly used to enlarge the parameter space. A conceptual grounding matching that of quantum theory is given in the work of Piedeleu, Kartsaklis et al.~\cite{Piedeleu2015}, where they are used as a model for ambiguous words, representing these words as a statistical mixture over their possible `pure'  meanings. Here, the overall setting was that of categorical compositional distributional (DisCoCat) models of meaning of Coecke, Sadrzadeh and Clark \cite{Coeckea}, which was itself also strongly inspired by quantum theory \cite{teleling}. 

A second application of density operators in NLP is found in the work of Balkir, Bankova et al.~\cite{Balkir2016, bankova2016graded, Balkr}, where lexical entailment is modelled by exploiting the fact that density operators can be partially ordered \cite{CoeckeMartin, Weteringen}. Naturally, this brought the need for a model that could accommodate ambiguity and lexical entailment simultaneously. It was to this end that Ashoush and Coecke started iterating the constructions of density operators. The results of this paper will hopefully contribute to further developing such models for NLP.

The structure of this paper is as follows. After a brief explanation of the graphical notation used in this paper, we recap the two constructions of density operators as described in \cite{Ashoush2015}, using both traditional Dirac bra-ket notation and diagrams. Then, in section \ref{section:notthesame}, we show that the results of iterating these constructions twice are no longer identical. The results are however strongly related: one being a subspace of the other, which we prove in section \ref{section:subspace}. Next, we characterise what results from double mixing as a special class of disentangled states in section \ref{section:characterisation_twicemix}, and reveal the symmetries caused by double dilation in section \ref{section:characterisation_twicedilation}. Finally, in section \ref{section:generalisation} we analyse the general case of applying both constructions any finite number of times, which emphasises the difference between them.

\section{Graphical notation}\label{section:diagram_notation}

Our proofs use a diagrammatic language designed for categorical quantum mechanics \cite{Kindergarten, Selinger2007, CPV, CPaqPav}, building further on Penrose's notation \cite{Penrose1971}.  We use this graphical notation because it greatly simplifies otherwise tedious proofs, abstracting away from unimportant details. It also has the advantage that the results are true in a more general setting than just Hilbert spaces. For the reader unfamiliar with this graphical notation we briefly summarise the main components that will feature in this paper. For an extensive introduction we refer to the textbook \cite{CoeckeBOOK} or the shorter  paper version \cite{Coecke2015, Coecke2016} which is also self-contained.

\subsection{The basics: boxes and wires}\label{section:basic_diagrams}

The diagrammatic language represents operators as boxes with inputs and outputs. Wires that go into or come out of these boxes represent the Hilbert spaces on which the operators act.
\[
  \scalebox{0.6}{\tikzfig{general_process_withtypes}}
\]
Operator composition is depicted by connecting the output of one box with the input of another, while tensoring operators is depicted by simply drawing the boxes next to each other.
\begin{center}
  \begin{tabular}{ccc}
    \scalebox{0.6}{\tikzfig{general_process_composition_withtypes}} & \qquad & \scalebox{0.6}  {\tikzfig{general_process_tensor_withtypes}} \\
    \small operator composition & \qquad &\small tensoring\\
    \end{tabular}
\end{center}
Vectors (kets) and functionals (bras) respectively are boxes with no input(s) or no output(s), so that an operator acting on a vector boils down to connecting the vector and the operator. We can turn a vector into a functional via horizontal reflection, which then yields a diagrammatic representation of the inner-product.
\begin{center}
  \begin{tabular}{ccccccc}
     \scalebox{0.6}{\tikzfig{general_state}} & \qquad & \scalebox{0.6}{\tikzfig{general_effect}} & \qquad & \scalebox{0.6}{\tikzfig{general_process_appliedto_state}} & \qquad & \scalebox{0.6}{\tikzfig{general_state_effect}} \\
     \small a vector & \qquad & \small a functional & \qquad & \small application & \qquad & \small inner product \\
    \end{tabular}
\end{center}
Inner products produce numbers, which are boxes with no inputs and no outputs. In this paper, we are not too concerned about numbers. Whenever two diagrams are equal to each other up to a non-zero number, we indicate this with $\approx$ instead of $=$.
\[
\scalebox{0.6}{\tikzfig{general_process_with_number}} \;\; \approx \;\; \scalebox{0.6}{\tikzfig{general_process}}
\]

There is a reason for all boxes to be drawn asymmetrically: conjugates, adjoints and transposes are depicted as reflections and rotations of the original operator.
\begin{center}
  \begin{tabular}{ccccccc}
     \scalebox{0.6}{\tikzfig{general_process}} & \qquad &
     \scalebox{0.6}{\tikzfig{general_process_adjoint}} & \qquad & \scalebox{0.6} {\tikzfig{general_process_transpose}}  & \qquad & \scalebox{0.6}{\tikzfig{general_process_conjugate}}\\
     \small an operator & \qquad & \small its adjoint & \qquad & \small its transpose & \qquad & \small its conjugate \\
    \end{tabular}
\end{center}
where the latter is the result of both taking the adjoint and the transpose. Whenever the adjoint/conjugate of an operator is equal to the operator itself, it is called self-adjoint/self-conjugate. Sometimes, self-adjointness or self-conjugateness of an operator is clear from the corresponding diagram: it has vertical/horizontal reflection symmetries (e.g.~the diagrams below). We call such operators adjoint-symmetric/ conjugate-symmetric. In the literature, adjoint-symmetric operators are also called \em positive \em operators, given that, in linear algebra, that is what these diagrams correspond to.
\begin{center}
\begin{tabular}{ccc}
  \scalebox{0.5}{\tikzfig{adjoint-symmetric}} & \qquad & \scalebox{0.5}{\tikzfig{conjugate-symmetric}} \\
  \small adjoint-symmetric & &\small conjugate-symmetric
\end{tabular}
\end{center}

Since the transpose operates between the dual spaces, so does the conjugate. We represent the dual space of $A$ by $\overline{A}$. Although we distinguish notationally between a space and its dual, we will treat them as being the same, which can be done by fixing a basis.

We use thick wires of type $\hat{A}$ as shorthand for two wires of respective types $A$ and $\overline{A}$. Operators can also be drawn as boxes with thick lines. Analogous to thick wires, this is shorthand for tensoring the operator and its conjugate.
\[
\scalebox{0.8}{\tikzfig{general_thick_wire}} \qquad\qquad \scalebox{0.6}{\tikzfig{general_thick_process}}
\]

\subsection{Spiders}\label{section:spiders}

Apart from boxes there are spiders (little circles), which are used as shorthand for various sums over basis vectors. Each leg of the spider is an incoming (bra) or outgoing (ket) basis vector.
\begin{center}
  \begin{tabular}{ccccc}
    \scalebox{0.6}{\tikzfig{spider_straight}} & \qquad & \scalebox{0.6}{\tikzfig{spider_copy}} & \qquad & \scalebox{0.6}{\tikzfig{general_spider}}\\
   \small identity spider & \qquad & \small copy spider & \qquad & \small general spider\\
  \end{tabular}
\end{center}
When two or more spiders are connected, they can fuse together. Conversely, a single spider can fission into multiple ones. Spider fusion and fission follows to the following rule:
\begin{center}
  \begin{tabular}{ccc}
    \scalebox{0.6}{\tikzfig{general_spider_defuse}} & $\leftrightarrow$ & \scalebox{0.6}{\tikzfig{general_spider_fused}} \\
      \multicolumn{3}{c}{\small ltr: fusion, rtl: fission} \\
  \end{tabular}
\end{center}
Spiders come in a family of three:
classical spiders, quantum spiders and bastard spiders. Classical spiders are the ones drawn above, simple thin-wired spiders. Quantum spiders are the thick-wired cousins of classical spiders. Just as with operators, a quantum spider is a classical spider paired with its conjugate. Quantum spiders follow the same fusion and fission rules as classical spiders.
\begin{center}
  \begin{tabular}{c}
    \scalebox{0.6}{\tikzfig{general_quantum_spider}} \\
   \small a quantum spider \\
  \end{tabular}
\end{center}
Bastard spiders are a mix of classical and quantum spiders. The spider itself is classical, but it can have both quantum and classical legs. Bastard spiders can fuse and fission with all three species. When a bastard spider fuses with a quantum spider, the resulting spider is always a bastard spider.
\begin{center}
  \begin{tabular}{c}
    \scalebox{0.6}{\tikzfig{general_spider_defuse_bastard-quantum}} \;\; = \;\; \scalebox{0.6}{\tikzfig{general_spider_fused_bastard}} \\
   \small a bastard spider and a quantum spider meet\\
  \end{tabular}
\end{center}
In some special cases, we omit the spider circle:
\begin{center}
  \begin{tabular}{ccccc}
    \scalebox{0.6}{\tikzfig{spider_identity}} & \qquad & \scalebox{0.6}{\tikzfig{spider_cap}} & \qquad & \scalebox{0.6}{\tikzfig{spider_cup}} \\
   \small identity & \qquad & \small cap & \qquad & \small cup \\
  \end{tabular}
\end{center}
Another special spider is the following bastard spider:
\[
\scalebox{0.6}{\tikzfig{spider_discarding}} \;\; =  \;\; \scalebox{0.6}{\tikzfig{spider_discarding_sum}}
\]
It is just a cap in which the two inputs are seen as a quantum wire. It is often called discarding and has a corresponding notation:
\[
\scalebox{0.6}{\tikzfig{discarding}} \;\; = \;\; \scalebox{0.6}{\tikzfig{spider_discarding}}
\]

\subsection{Trace and changing sums into spiders}\label{section:changing_sum_into_spider}

We can now draw the trace of an operator quite elegantly. The trace of an operator is the sum over the elements on the diagonal of its matrix. In other words, if we choose an orthonormal basis, then the trace is the following sum (first diagram below). As $\langle e_i | e_i \rangle = 1$, the second diagram is equal to the first. The third diagram exploits the fact that $\langle e_i | e_j \rangle = \delta_{ij}$, so it is safe to change some of the indices. Now we can replace the sums in the diagram with spiders, yielding the fourth diagram. All the resulting spiders happen to be special spiders mentioned above (a cup, a cap and the identity), so we do not actually draw them. The last diagram is the standard diagram for the trace of an operator\cite{Selinger2007}.
\[
\scalebox{0.6}{\tikzfig{general_process_trace_sum}} \;\;\;\; = \;\;\;\; \scalebox{0.6}{\tikzfig{general_process_trace_sum_additional_basisvectors}} \;\;\;\; = \;\;\;\;
\scalebox{0.6}{\tikzfig{general_process_trace_sum_additional_basisvectors_and_sum}} \;\;\;\; = \;\;\;\;
\scalebox{0.6}{\tikzfig{general_process_trace_spiders}} \;\;\;\; = \;\;\;\;
\scalebox{0.6}{\tikzfig{general_process_trace}}
\]

Trace is not the only diagram in which sums can be replaced by spiders. In fact, \em all \em finite sums appearing in diagrams can be replaced by spiders. In general, for diagrams with a sum like the one depicted below, the strategy is as follows.
\[
\scalebox{0.6}{\tikzfig{replacing_sums_spiders_1}}
\]
First, choose a Hilbert space of dimension equal to the cardinality of the sum and choose an orthonormal basis for that Hilbert space. Since we have:
\[
\scalebox{0.6}{\tikzfig{replacing_sums_spiders_new}} \;\;  =  \;\; \delta_{ijk}
\]
the diagram can be rewritten as follows.
\[
\scalebox{0.6}{\tikzfig{replacing_sums_spiders_5}}
\]
Finally, we internalise the sums in boxes.
\[
\scalebox{0.6}{\tikzfig{replacing_sums_spiders_6}} \;\; \rightarrow \;\;
\scalebox{0.6}{\tikzfig{replacing_sums_spiders_8}}
\]
Replacing sums with spiders makes it easier to see the overall structure of the diagram: we now clearly see which boxes had correlated indices.

\subsection{Choi-Jamiolkowski isomorphism}\label{section:process_state_duality}

For notational convenience, rather than working with density operators we will work with their \em Choi-Jamiolkowski \em isomorphic vectors.  The Choi-Jamiolkowski isomorphism is the one-to-one correspondence between operators and vectors on tensor product spaces: when bending up the input wire from an operator,
it changes to a second output wire. Conversely, bending down one of two output wires turns it back into an operator.
\begin{center}
  \begin{tabular}{ccc}
  \scalebox{0.6}{\tikzfig{general_process_bentup}}\;\; $\leftrightarrow$ \;\; \scalebox{0.6}{\tikzfig{general_state_bipartite}} & \qquad & \scalebox{0.6}{\tikzfig{general_state_bipartite_bentdown}}\;\;
    $\leftrightarrow$ \;\; \scalebox{0.6}{\tikzfig{general_process}}
  \end{tabular}
\end{center}
As a result of using this, each iterated application of the mixing and dilation constructions will only involve vectors instead of super-operators, super-super-operators and so on. In addition, partial traces turn into discarding:
\[
\scalebox{0.6}{\tikzfig{trace_special}} \;\; \rightarrow \;\; \scalebox{0.6}{\tikzfig{trace_special_2}} \;\; \rightarrow \;\; \scalebox{0.6}{\tikzfig{trace_special_3}}
\]
The idea to use the Choi-Jamiolkowski isomorphism to write density operators as vectors comes from \cite{Selinger2007}.

\section{Double mixing and double dilation}\label{section:constuction}

We recap both mixing and dilation, and show how these constructions can be iterated.

\subsection{Mixing}\label{section:mixing}
Given a set of $n$ normalised vectors $|\phi_i\rangle$ in a finite-dimensional Hilbert space $A$, and a probability distribution $\{p_i\}^{i\leq n}$, we form the density operator representing the mixture of these vectors as follows:
\begin{align*}
  (\{ |\phi_i\rangle\},\{ p_i\}) \mapsto \rho = & \sum_{i=1}^{n} p_i |\phi_i\rangle \langle \phi_i|
\end{align*}
In the diagrammatic language:
\[
\left(\left\{ \scalebox{0.6}{\tikzfig{phi_i_state}} \right\},\left\{ p_i \right\}\right)\;\; \mapsto \;\;\scalebox{0.6}{\tikzfig{rho_process}} \;\; = \;\; \scalebox{0.6}{\tikzfig{resultofmixing_process}}
\]
Alternatively, we could express $\rho$ as a vector in $A\otimes\overline{A}$. The benefit of obtaining a vector rather than an operator is that we get a construction that can be iterated, since it sends vectors to vectors:
\begin{align}\label{construction:sum}
(\{ |\phi_i\rangle\},\{ p_i\}) \mapsto ||\rho\rangle\rangle = & \sum_{i=1}^{n} p_i |\phi_i\rangle\overline{|\phi_i\rangle}
\end{align}
Here $\overline{|\phi_i\rangle}$ is the conjugate of $|\phi_i\rangle$, $|\phi_i\rangle\overline{|\phi_i\rangle}$ is shorthand for $|\phi_i\rangle \otimes \overline{|\phi_i\rangle}$ and notation $||\cdot\rangle\rangle$ is used to remind us that $\rho$ is a vector in Hilbert space $\hat{A} = A\otimes \overline{A}$ instead of $A$.

Diagrammatically, this construction translates as:
\[
\left(\left\{ \scalebox{0.6}{\tikzfig{phi_i_state}} \right\},\left\{ p_i \right\}\right)\;\; \mapsto \;\;\scalebox{0.6}{\tikzfig{rho_state}} \;\; = \;\; \scalebox{0.6}{\tikzfig{resultofmixing}}
\]

A second iteration of (\ref{construction:sum}) with $m$ density vectors $||\rho_k\rangle\rangle = \sum_{i=1}^{n_k} p_{ki} |\phi_{ki}\rangle \overline{|\phi_{ki}\rangle}$ and a probability distribution $\{r_k\}$ yields:
\begin{align}
  |||\Psi\rangle\rangle\rangle = & \sum_{k=1}^{m} r_k ||\rho_k\rangle\rangle \overline{||\rho_k\rangle\rangle} \nonumber
 \\
  = & \sum_{k=1}^{m} r_k \left( \sum_{i=1}^{n_k} p_{ki} |\phi_{ki}\rangle \overline{|\phi_{ki}\rangle}\right) \left( \overline{\sum_{j=1}^{n_k} p_{kj} |\phi_{kj}\rangle \overline{|\phi_{kj}\rangle}}\right)\nonumber \\
  = & \sum_{k=1}^{m}\sum_{i=1}^{n_k}\sum_{j=1}^{n_k} r_k p_{ki} p_{kj} |\phi_{ki}\rangle \overline{|\phi_{ki}\rangle}|\phi_{kj}\rangle \overline{|\phi_{kj}\rangle}\label{result:doublemix}
\end{align}
Where $|||\Psi\rangle\rangle\rangle$ is a vector in $A\otimes\overline{A}\otimes A\otimes\overline{A} = \hat{\hat{A}}$. The accompanying diagram is (only showing the result):
\[
\scalebox{0.6}{\tikzfig{Psi_state}} \;\; = \;\;\scalebox{0.6}{\tikzfig{resultofmixing_twice}}
\]
The dotted lines indicate the vectors $||\rho_k\rangle\rangle$ and $\overline{||\rho_k\rangle\rangle}$.

To make the diagram look prettier and to make it easier to compare to later results, we can hide the summations over $i$ and $j$ inside caps (wires $B$ in the diagram below) and summation over $k$ inside a four-legged spider (wire $C$), following the procedure `changing sums into spiders' explained in section \ref{section:changing_sum_into_spider}. The individual $\phi_{ki}$ will be no longer visible; they are absorbed into a general $\phi$:
\[
\scalebox{0.6}{\tikzfig{Psi_state}} \;\; = \;\;\scalebox{0.6}{\tikzfig{doublemixstate_phi}}
\]
We call a vector resulting from twice applying construction (\ref{construction:sum}) \emph{doubly mixed}.

\subsection{Dilation}

On the other hand, if we have a vector $|\phi_{AB}\rangle$ in space $A\otimes B$, we can form the operator $|\phi_{AB}\rangle\langle\phi_{AB} |$ and trace out $B$:
\begin{align*}
 |\phi_{AB}\rangle \mapsto \rho' = & \Tr_B |\phi_{AB}\rangle\langle\phi_{AB}|
\end{align*}
Or as a diagram:
\[
\scalebox{0.6}{\tikzfig{phi_AB_state}} \;\; \mapsto\;\; \scalebox{0.6}{\tikzfig{rho_prime_process}}\;\; = \;\;\scalebox{0.6}{\tikzfig{Trace_B_Phi_AB}}
\]
If we would rather have a vector, this construction becomes:
\begin{align}\label{construction:reduce}
 |\phi_{AB}\rangle \mapsto ||\rho'\rangle\rangle = &  \sum_{i=0}^{\dim(B)-1}\langle e^B_i |\phi_{AB}\rangle \overline{\langle e^B_i |\phi_{AB}\rangle}
\end{align}
For some orthonormal basis $\{|e^B_i\rangle\}$ of $B$.
The corresponding diagram is:
\[
\scalebox{0.6}{\tikzfig{phi_AB_state}} \;\; \mapsto\;\; \scalebox{0.6}{\tikzfig{rho_prime_state}}\;\; = \;\;\scalebox{0.6}{\tikzfig{reducedstate}}
\]

To iterate (\ref{construction:reduce}), we need the Hilbert space $A$ to be of form $A \otimes C$, so that after reducing a second time, the result is still a vector instead of a number. So suppose that our original vector was $|\phi_{ABC}\rangle$. Applying (\ref{construction:reduce}) using space $B$ then yields $||\rho'_{\hat{A}\hat{C}}\rangle\rangle$ in $A \otimes C \otimes \overline{C} \otimes \overline{A}$. Applying (\ref{construction:reduce}) again, now using space $\hat{C} = C \otimes \overline{C}$ results in:

\begin{align}
  |||\Psi'\rangle\rangle\rangle = & \sum_{k,l}\langle\langle e^{\hat{C}}_{k,l} ||\rho'_{\hat{A}\hat{C}}\rangle\rangle \overline{\langle\langle e^{\hat{C}}_{k,l} ||\rho'_{\hat{A}\hat{C}}\rangle\rangle} \nonumber \\
  = &  \sum_{k}\sum_{l}\left( \sum_{i}\langle e^C_k| \langle e^B_i|\phi_{ABC}\rangle \overline{\langle e^C_l| \langle e^B_i|\phi_{ABC}\rangle} \right) \left(\overline{\sum_{j}\langle e^C_k| \langle e^B_j|\phi_{ABC}\rangle \overline{\langle e^C_l| \langle e^B_j|\phi_{ABC}\rangle}} \right) \nonumber \\
  = & \sum_{k}\sum_{l}\sum_{i}\sum_{j}\langle e^C_k| \langle e^B_i|\phi_{ABC}\rangle \overline{\langle e^C_l| \langle e^B_i|\phi_{ABC}\rangle} \langle e^C_l| \langle e^B_j|\phi_{ABC}\rangle \overline{\langle e^C_k| \langle e^B_j|\phi_{ABC}\rangle} \label{result:doublereduce}
\end{align}
The corresponding diagram is:
\[
\scalebox{0.6}{\tikzfig{Psi_state}} \;\; = \;\;\scalebox{0.6}{\tikzfig{dualdensitymatrix1_phi_ABC}}
\]
We call such vectors \emph{doubly dilated}.

\section{Counterexample to equivalence}\label{section:notthesame}

Comparing expressions (\ref{construction:sum}) and (\ref{result:doublemix}) to (\ref{construction:reduce}) and (\ref{result:doublereduce}), it is not at all obvious that mixing and dilation could be equivalent; indeed, they are not! Although it is possible to prove that both constructions give the same results when applied just a single time, this is no longer true when the constructions are iterated. As a counterexample, we give a vector resulting from double dilation that cannot be the result of double mixing.

\begin{theorem}\label{theorem_strictsubspace}
There exist vectors resulting from double dilation that cannot be written as vectors resulting from double mixing.
\end{theorem}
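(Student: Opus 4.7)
The plan is to exhibit a specific doubly dilated vector that cannot be written in doubly mixed form. The key structural observation, read off directly from the diagrams corresponding to (\ref{result:doublemix}) and (\ref{result:doublereduce}), is that a doubly mixed vector always factorises as $\sum_k r_k ||\rho_k\rangle\rangle \otimes \overline{||\rho_k\rangle\rangle}$ across the bipartite cut between its two $\hat{A}$-halves: the two halves communicate only through a single classical spider summing over the mixing index $k$. A doubly dilated vector by contrast has quantum wires crossing this cut (originating from the traced-out $\hat{C}$), so it need not admit any such sum-of-products decomposition. I will use this as the distinguishing invariant.

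Concretely, I would first work in the smallest nontrivial setting $A = B = C = \mathbb{C}^2$ and compute the double dilation of a carefully chosen $|\phi_{ABC}\rangle$ using (\ref{result:doublereduce}). In terms of the conditional amplitudes $\psi_{ik} := \langle e^C_k|\langle e^B_i|\phi\rangle \in A$, the double dilation takes the form $\sum_{k,l,i,j} \psi_{ik} \otimes \overline{\psi_{il}} \otimes \psi_{jl} \otimes \overline{\psi_{jk}}$, so I want to choose $|\phi_{ABC}\rangle$ so that the four-tensor obtained by this contraction pattern does not factorise across the cut $(1,2)|(3,4)$. A naive GHZ state in the computational basis has $\psi_{ik}$ supported on disjoint basis vectors of $A$, which collapses the sum to a diagonal product state that happens to be doubly mixable; a good non-degenerate choice is one for which the family $\{\psi_{ik}\}_{i,k}$ spans $A$ non-diagonally, for example a GHZ-like state modified by a small rotation entangling the $B$ and $C$ systems prior to the first dilation.

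To certify non-mixedness I would take the Choi--Jamiolkowski image of the candidate vector under the bipartite cut, obtaining an operator $\hat{M}$ on $\hat{A} = A \otimes \overline{A}$, and then apply an entanglement witness inside the bipartite structure of $\hat{A}$, such as the Peres--Horodecki partial-transpose test. Any doubly mixed vector produces a CJ image of the form $\sum_k r_k ||\rho_k\rangle\rangle \langle\langle \rho_k||$, which has a very constrained partial-transpose structure; exhibiting a negative partial-transpose eigenvalue, or equivalently a Schmidt rank across the cut exceeding what that constrained form permits, yields the desired contradiction. I expect the main obstacle to be the bookkeeping around four tensor factors with two conjugations, and in particular finding a $|\phi_{ABC}\rangle$ that escapes the various ``accidental factorisations'' the construction is prone to. Working diagrammatically with spider fusion, rather than index-heavy algebra, should keep the argument transparent and let the distinction between (\ref{result:doublemix}) and (\ref{result:doublereduce}) be read off essentially by inspection.
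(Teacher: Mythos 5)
As it stands this is a plan rather than a proof, and both load-bearing steps are left open. First, no counterexample is actually exhibited: you correctly note that the GHZ state collapses to something doubly mixable, but the ``good non-degenerate choice'' of $\ket{\phi_{ABC}}$ is never written down and the certifying computation is never carried out. Second, and more seriously, the invariant you lead with does not discriminate. \emph{Every} vector of $\hat{A}\otimes\hat{A}$ admits a sum-of-products decomposition across the cut between its two halves (Schmidt decomposition), so ``factorises as $\sum_k r_k ||\rho_k\rangle\rangle\otimes\overline{||\rho_k\rangle\rangle}$'' is only restrictive because the $r_k$ are nonnegative, the second factor is the conjugate of the first, and each $||\rho_k\rangle\rangle$ is itself the Choi--Jamiolkowski vector of a positive operator. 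Those constraints are what force $\text{operator}_2=\sum_k r_k\,\rho_k\otimes\rho_k$ to be a \emph{separable} density operator on $A\otimes A$, and it is across that $A\,|\,A$ cut that a partial-transpose violation would certify non-mixedness. The test you actually describe --- a witness applied to $\hat{M}=\sum_k r_k ||\rho_k\rangle\rangle\langle\langle\rho_k||$ ``inside the bipartite structure of $\hat{A}$'', i.e.\ across $A\,|\,\overline{A}$ --- does not separate the two classes: $||\rho_k\rangle\rangle$ is an \emph{entangled} vector of $A\otimes\overline{A}$ whenever $\rho_k$ has rank at least two, so genuinely doubly mixed vectors generically fail that test as well. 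A Schmidt-rank bound across the $(1,2)|(3,4)$ cut fails for the same reason: density vectors $||\rho_k\rangle\rangle$ span all of $\hat{A}$, so the doubly mixed form places no bound on that rank.

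For contrast, the paper's proof avoids all of this. Its counterexample is essentially the simplest doubly dilated vector available (a cap composed with a cup, i.e.\ the identity after wire-bending), and the certification is abstract rather than numerical: if the identity on $C_1\otimes\overline{C_1}$ had the doubly mixed form, then rewriting the classical spider as a bastard spider, fissioning off a discard, and invoking the fact that a process with a pure reduced process must $\otimes$-separate \cite[Proposition 6.80]{CoeckeBOOK} would force the identity to $\circ$-separate, which is absurd. That argument needs no basis, no eigenvalue computation, and holds in any spider category, whereas a Peres--Horodecki test is tied to finite-dimensional Hilbert spaces. Your strategy can in principle be repaired --- first establish that double mixing yields disentangled states across the correct (doubled) cut, as in the paper's Section 5, and then exhibit a doubly dilated vector whose corresponding CJ-density operator is NPT --- but until the cut is fixed and a concrete witness is computed, the argument has a gap at exactly the point where the theorem is decided.
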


\begin{proof}
We will show that the following diagram on the left cannot be formed with either of the two diagrams on the right:
\[
\scalebox{1.0}{\tikzfig{notadoublemixture1}} \notin \left\{ \scalebox{0.6}{\tikzfig{doublemixmorphism2}}\;\;,\;\; \scalebox{0.6}{\tikzfig{doublemixmorphism2otherway}}  \right\}
\]
This will be enough to prove the theorem, as the diagram on the left is Choi-Jamiolkowski isomorphic to this vector resulting from double dilation:
\begin{center}
  \begin{tabular}{c}
    \scalebox{0.6}{\tikzfig{counterexample_state}} \\
    \small Bend down the four leftmost outputs of the vector to get the given operator.
  \end{tabular}
\end{center}
while the two diagrams on the right are the only two possible operators resulting from the Choi-Jamiolkowski isomorphism applied to a vector resulting from double mixing (also see equations \ref{dm1} and \ref{dm2} in section \ref{section:characterisation_twicedilation} below).

First, we prove that the identity morphism cannot be of this form:
\[
\scalebox{0.6}{\tikzfig{doublemixmorphism2otherway}}
\]
To show this, suppose that the identity \emph{can} be written in that form:
\begin{align*}
  Id_{C_1 \otimes \overline{C_1}}\;\; &= \;\; \scalebox{0.6}{\tikzfig{identity2lines}} \;\; = \;\; \scalebox{0.6}{\tikzfig{doublemixmorphism2otherway}} \;\; = \;\;
  \scalebox{0.6}{\tikzfig{doublemixmorphism2otherway_dubbelgevouwen_spider}}\;\; = \;\; \scalebox{0.6}{\tikzfig{doublemixmorphism2otherway_dubbelgevouwen}}
\end{align*}
Here, the third equality is just rewriting the wires and boxes using thick lines (see section \ref{section:basic_diagrams} for explanation about thick wires and boxes). By doing this, the classical spider turns into a bastard spider (see section \ref{section:spiders} for explanation about spiders). In the last diagram, the bastard spider used fission to turn into a quantum spider and the special discarding spider.

We can now use the following theorem:
\begin{quotation}
``If a reduced operator $\Psi_{reduced}$ (an operator with one of its outputs discarded) is pure (can be written as a tensor product of some operator and its conjugate: $\Psi_{reduced} = \Phi \otimes \overline{\Phi}$), then the original operator can be written as a tensor product of that pure operator and a (possibly impure) vector: $\Psi = \Phi \otimes \overline{\Phi} \otimes ||\rho\rangle\rangle$'' \cite[Proposition 6.80]{CoeckeBOOK}.
\end{quotation}
Applying this theorem to the rightmost diagram above gives:
\[
\scalebox{0.6}{\tikzfig{doublemixmorphism2otherway_dubbelgevouwen_2}}\;\; = \;\; \scalebox{0.6}{\tikzfig{identity_and_state}}
\]
For some vector $\rho$. As $\rho$ is non-zero, there exists a basis vector $\langle e_i |$ such that the following is nonzero:
\[
\scalebox{0.6}{\tikzfig{rho_effect_i}}
\]
Combining this with the above:
\begin{align*}
  \scalebox{0.6}{\tikzfig{identity_and_state_with_i}} \;\; & = \;\;\scalebox{0.6}{\tikzfig{doublemixmorphism2otherway_dubbelgevouwen_with_i}} \;\;  = \;\; \scalebox{0.6}{\tikzfig{doublemixmorphism2otherway_dubbelgevouwen_disconnect}}
\end{align*}
And hence:
\[
\scalebox{0.6}{\tikzfig{identity2lines}} \;\; \approx  \;\; \scalebox{0.6}{\tikzfig{doublemixmorphism2otherway_dubbelgevouwen_disconnect}}
\]
In other words, the identity $\circ$-separates, which is non-sense \cite{CoeckeBOOK}.
Therefore, the identity cannot be of form:
\[
\scalebox{0.6}{\tikzfig{identity2lines}} \;\; \neq  \;\;\scalebox{0.6}{\tikzfig{doublemixmorphism2otherway}}
\]
With some wire-bending (using the Choi-Jamiolkowski isomorphism), we then see that the diagram composed of a cap followed by a cup cannot be of form:
\[
\scalebox{1.0}{\tikzfig{capcomposedwithcup}}\;\; \neq \;\;\scalebox{0.6}{\tikzfig{doublemixmorphism2}}
\]
And so:
\[
\scalebox{1.0}{\tikzfig{notadoublemixture1}} \;\;\notin \;\;\left\{ \scalebox{0.6}{\tikzfig{doublemixmorphism2}}\;\;,\;\; \scalebox{0.6}{\tikzfig{doublemixmorphism2otherway}}  \right\}
\]
which completes the proof.
\end{proof}

\section{Double mixing $\subseteq$ double dilation}\label{section:subspace}

In the previous section, we gave an example of a vector resulting from double dilation that could not result from double mixing. The converse, however, does hold: every vector resulting from double mixing can be obtained from double dilation. In other words, double mixing yields a proper subspace of double dilation. The proof is a beautiful example of diagrammatic reasoning with spiders; for those who prefer bra-ket notation instead, the appendix contains a proof sketch in which we manipulate the bra-ket expression (\ref{result:doublemix}) to match that of (\ref{result:doublereduce}).

\begin{theorem}\label{theorem_subspace}
Every vector resulting from double mixing also results from double dilation.
\end{theorem}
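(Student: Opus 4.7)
The plan is to exhibit an explicit tripartite vector $|\phi_{ABC}\rangle$ whose double dilation reproduces a given doubly mixed vector
$$|||\Psi\rangle\rangle\rangle \;=\; \sum_{k,i,j} r_k\,p_{ki}\,p_{kj}\,|\phi_{ki}\rangle\,\overline{|\phi_{ki}\rangle}\,|\phi_{kj}\rangle\,\overline{|\phi_{kj}\rangle}$$
from equation~\eqref{result:doublemix}. Comparing the diagrams of section~\ref{section:constuction}, the key mismatch is on the system $C$: the doubly mixed diagram has a single four-legged classical spider linking all four copies of $\phi$, whereas the doubly dilated diagram of~\eqref{result:doublereduce} has only two caps on $C$ (joining slots $1\leftrightarrow 4$ and $2\leftrightarrow 3$ in a crossed pattern). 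The heart of the proof will be to emulate the ``all four $C$-indices agree'' behaviour of a four-legged classical spider using only two independent caps, by smuggling an auxiliary classical copy of $C$ into the $B$ system of $\phi_{ABC}$.

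Concretely, I would enlarge $B$ to $B_1 \otimes B_2$, with $B_1$ sharing its basis with $C$ and $B_2$ large enough to index the $\phi_{ki}$'s, and set
$$|\phi_{ABC}\rangle \;:=\; \sum_{k,i}\, r_k^{1/4}\sqrt{p_{ki}}\;|\phi_{ki}\rangle_A\otimes|e^{B_1}_k\rangle\otimes|e^{B_2}_i\rangle\otimes|e^{C}_k\rangle.$$
A direct computation gives $\langle e^{C}_{k}|\langle e^{B_1}_{k'}|\langle e^{B_2}_{i}|\phi_{ABC}\rangle = r_k^{1/4}\sqrt{p_{ki}}\,\delta_{k,k'}\,|\phi_{ki}\rangle$. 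Substituting this into the four slots of~\eqref{result:doublereduce}, each Kronecker $\delta$ ties the $B_1$-label of that slot to its $C$-label; the two $B$-caps of the dilation then identify the $B_1$-labels of slots $1$ and $2$ (forcing $l=k$) and of slots $3$ and $4$ (forcing $l=k$ again). All four of the indices $k,l$ and the two summation indices on $B_1$ collapse into one. The four $r_k^{1/4}$ factors multiply to $r_k$, and the four $\sqrt{p_{k\cdot}}$ factors combine to $p_{ki}\,p_{kj}$, matching~\eqref{result:doublemix} term by term.

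Diagrammatically, this is spider fusion in action: the four-legged classical spider on $C$ is rebuilt inside the dilation by threading a classical copy of $C$ through the new wire $B_1$ and catching it with matching caps on $B$, so that the two dilation caps on $C$ plus the $B_1$-structure fuse into the single four-legged spider of~\eqref{result:doublemix}. I expect the main obstacle to be making this diagrammatic merger rigorous under the bastard/classical spider rules of section~\ref{section:spiders}, in particular keeping track of which legs are classical versus quantum when $B_1$ is absorbed back into the $B$-caps; once the spider fusion is justified, the resulting diagram coincides on the nose with the doubly mixed diagram of section~\ref{section:mixing}, completing the proof.
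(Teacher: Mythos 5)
Your proposal is correct and is essentially the paper's own argument: the paper proves this by spider fission, splitting the four-legged classical spider of the doubly mixed vector into spiders that are then absorbed into the boxes so that the remaining connections form the rainbow of a double dilation, and your auxiliary register $B_1$ carrying a classical copy of the $C$-index is exactly the bra-ket realisation of those absorbed spider legs. If anything, your explicit construction is slightly tighter than the paper's appendix sketch, which stops at an expression still containing an externally imposed Dirac delta $\delta_{kl}$ rather than deriving that delta from the structure of $|\phi_{ABC}\rangle$ as you do.
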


\begin{proof}
Given a vector resulting from double mixing:
\[
\scalebox{0.6}{\tikzfig{doublemixstate}}
\]
We can use the spider fission (see section \ref{section:spiders}) to make four spiders:
\[
\scalebox{0.6}{\tikzfig{doublemixstate_2}}
\]
Moving these spiders closer to the four boxes gives a familiar picture:
\[
\scalebox{0.6}{\tikzfig{doublemixstate_3}}
\]
In the final step, we absorb the spiders into the boxes, which yields the vector resulting from double mixing as a vector resulting from double dilation:
\[
\scalebox{0.6}{\tikzfig{doublemixstate_4}}
\]
\end{proof}

By Theorems \ref{theorem_strictsubspace} and \ref{theorem_subspace} we then obtain:

\begin{corollary}
Double mixing yields a proper subspace of double dilation.
\end{corollary}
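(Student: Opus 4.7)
The plan is essentially to compose the two preceding theorems. Write $M \subseteq \hat{\hat{A}}$ for the set of vectors in $A\otimes\overline{A}\otimes A\otimes\overline{A}$ obtainable by double mixing, and $D \subseteq \hat{\hat{A}}$ for the set obtainable by double dilation. The corollary then asserts $M \subsetneq D$, and both inclusions have already been supplied.

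First, I would quote Theorem \ref{theorem_subspace} to establish the containment $M \subseteq D$. That theorem gives an explicit diagrammatic recipe — fission the single summation spider of the doubly mixed state into four spiders, slide them onto the four boxes, and absorb — which rewrites any element of $M$ as an element of $D$. So no further argument is needed for this direction.

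Second, I would invoke Theorem \ref{theorem_strictsubspace} to rule out equality. That theorem exhibits a concrete witness in $D \setminus M$, namely the vector Choi--Jamiolkowski-isomorphic to the cap-followed-by-cup diagram on $C_1 \otimes \overline{C_1}$; the key reason it lies outside $M$ is that, were it doubly mixed, the proof there forces the identity to $\circ$-separate, which is impossible. Since a single witness suffices, $M \subseteq D$ is strict.

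There is no substantive obstacle here: all the diagrammatic work lives in the two theorems. The only thing I would flag, mostly to preempt a reader's worry, is a small terminological point: the word "subspace" in the corollary is used informally to mean "proper subset of the ambient space $\hat{\hat{A}}$", rather than in the strict linear-algebraic sense. Neither $M$ nor $D$ is obviously closed under arbitrary scalar multiplication, since both arise from convex/positive constructions (convex mixtures on one side, partial traces of doubled kets on the other). With that reading understood, the combination $M \subseteq D$ from Theorem \ref{theorem_subspace} together with $D \not\subseteq M$ from Theorem \ref{theorem_strictsubspace} yields $M \subsetneq D$, and the corollary follows at once.
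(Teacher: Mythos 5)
Your proof matches the paper's: the corollary is obtained there immediately by combining Theorems \ref{theorem_strictsubspace} and \ref{theorem_subspace}, exactly as you do. Your terminological caveat about ``subspace'' is a fair aside but does not change the argument.
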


\newpage
\section{Double mixing $\simeq$ disentangled states}\label{section:characterisation_twicemix}

\begin{defn}
Following \cite{CoeckeBOOK}, \em disentangled \em bipartite states are those states with diagrams of form:
\[
\scalebox{0.6}{\tikzfig{rho_bipartite_state}} \;\; = \;\; \scalebox{0.6}{\tikzfig{general_disentangled_state}}
\]
\em Entangled \em states are those that are not disentangled.
\end{defn}
The intuition behind this idea is that disentangled states can share only classical information (the thin wire connecting the left and right halves of the diagram).

\begin{prop}
Vectors resulting from double mixing correspond to conjugate-symmetric
disentangled states.
\end{prop}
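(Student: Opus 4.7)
The plan is to prove both directions of the correspondence by explicit diagrammatic manipulation, working in the Choi-Jamiolkowski picture throughout.

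For the forward direction, I would start with the canonical double-mixing diagram introduced at the end of Section~\ref{section:mixing} and apply the Choi-Jamiolkowski isomorphism by bending half the outputs of $|||\Psi\rangle\rangle\rangle$ down. The four-legged classical spider on wire $C$, which internalises the outer sum over $k$, is unaffected by this bending and now appears as the unique thin-wire connection between the two halves of the resulting bipartite diagram: this is exactly the shape of a disentangled state as defined just above the proposition. Conjugate-symmetry follows by inspection: horizontally reflecting the bipartite diagram swaps the left and right halves while conjugating every asymmetric box, so the $\phi$ boxes on one side are mapped to the $\bar\phi$ boxes on the other, leaving the diagram invariant thanks to the symmetry of the classical spider and the caps on wire $B$.

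For the backward direction, I would take an arbitrary conjugate-symmetric disentangled bipartite state on $\hat A \otimes \hat A$. Disentanglement lets me factor it as $\sum_k r_k\,\sigma_k \otimes \tau_k$, where a classical spider carries the $k$-sum and $\sigma_k,\tau_k\in\hat A$. Conjugate-symmetry of the whole state then forces $\tau_k=\bar\sigma_k$ and the scalars $r_k$ to be real, since horizontal reflection both swaps the two halves and conjugates each side. It remains to rewrite each $\sigma_k\in A\otimes\bar A$ as a sum of doubled kets $\sum_i p_{ki}\,|\phi_{ki}\rangle\overline{|\phi_{ki}\rangle}$; once this is available, spider fission on the classical leg feeding $\sigma_k$ extracts an auxiliary classical wire that collects the inner weights $p_{ki}$ together with a cap on an auxiliary wire $B$ that pairs $\phi_{ki}$ with its conjugate, so reassembling the diagram reproduces the canonical double-mixing shape.

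The main obstacle is this final decomposition step: the passage from a generic $\sigma_k$ with the inherited conjugation symmetry to an explicit sum of doubled kets. This is essentially the spectral theorem inside the graphical calculus --- the Choi-Jamiolkowski image of $\sigma_k$ is a positive operator, which can be diagonalised into a non-negative combination of rank-one projectors $|\phi_{ki}\rangle\langle\phi_{ki}|$. Carrying this through diagrammatically, and ensuring the weights $p_{ki}$ and $r_k$ can be taken non-negative and normalised (possibly absorbing positive scalars into the paper's \scalareq equivalence), is the delicate part. All remaining manipulations reduce to routine applications of spider fusion/fission and Choi-Jamiolkowski wire-bending, so the bulk of the real work is concentrated in verifying that this spectral-style decomposition is available in the process-theoretic generality assumed throughout the paper.
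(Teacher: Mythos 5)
Your forward direction is essentially the paper's own proof: rewrite the double-mixing diagram in thick-wire notation, push the four-legged spider down through the Choi--Jamiolkowski isomorphism, and fission it so that the only connection between the two halves is a single thin classical wire; conjugate-symmetry is then read off the horizontal reflection symmetry of the resulting picture. That is all the paper actually establishes for this proposition, and your version of that argument is fine.

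The backward direction you add contains a genuine gap, and in fact the biconditional you are trying to prove is false. The fatal step is ``conjugate-symmetry of the whole state then forces $\tau_k=\bar\sigma_k$'': the decomposition $\sum_k r_k\,\sigma_k\otimes\tau_k$ of a disentangled state is far from unique, so symmetry of the sum does not symmetrise the individual terms, and no spectral decomposition of the $\sigma_k$ afterwards can repair this. Concretely, take $\rho_0=|0\rangle\langle 0|$, $\rho_1=|1\rangle\langle 1|$ and consider $\Xi=\tfrac12\bigl(||\rho_0\rangle\rangle\,\overline{||\rho_1\rangle\rangle}+||\rho_1\rangle\rangle\,\overline{||\rho_0\rangle\rangle}\bigr)$. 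This is disentangled (two boxes joined by a two-dimensional classical wire) and conjugate-symmetric (horizontal reflection swaps the two terms and the $\rho_j$ are real), yet it is not a double mixture: a double mixture corresponds, after regrouping, to $\sum_k r_k\,\rho'_k\otimes\rho'_k$ with each $\rho'_k$ positive, and evaluating $\tfrac12(\rho_0\otimes\rho_1+\rho_1\otimes\rho_0)$ against $\langle 00|\cdot|00\rangle$ forces $\langle 0|\rho'_k|0\rangle=0$ whenever $r_k>0$, which then contradicts $\langle 01|\cdot|01\rangle=\tfrac12$. So the conjugate-symmetric disentangled states strictly contain the doubly mixed ones; the proposition's ``correspond to'' can only be read as an inclusion, which is the direction the paper proves and the only direction that holds. (Your separate worry --- that positivity of the CJ image of each $\sigma_k$ must be extracted from conjugate-symmetry --- is also well founded, since conjugate-symmetry alone does not give positivity; but the term-matching problem above already defeats the argument even when all components happen to be positive.)
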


\begin{proof}
Consider a vector resulting from double mixing:
\[
\scalebox{0.6}{\tikzfig{doublemixstate_phi}}
\]
We rewrite it using the thick-line notation: $||\hat{\phi}\rangle\rangle = |\phi\rangle\overline{|\phi\rangle}$:
\[
\scalebox{0.6}{\tikzfig{doublemixstate_phi_thick}}
\]
Next, we move the spider downwards using the Choi-Jamiolkowski isomorphism, and use $\hat{f}$ for the operator resulting from the isomorphism applied to $||\hat{\phi}\rangle\rangle$. Then lastly, we use spider fission to arrive at:
\[
\scalebox{0.6}{\tikzfig{doublemixstate_phi_thick}} \;\;=\;\; \scalebox{0.6}{\tikzfig{doublemixstate_phi_thick_down}} \;\;=\;\; \scalebox{0.6}{\tikzfig{doublemixstate_phi_thick_down_disentangled}}
\]
This is the general form of a disentangled state as described above, with only one restriction: the bipartite state has to be conjugate-symmetric.
\end{proof}

Contrast this with a vector resulting from double dilation:
Consider a vector resulting from double mixing and follow the same steps as above in rewriting:
\[
\scalebox{0.6}{\tikzfig{dualdensitymatrix1_phi}} \;\; = \;\;
\scalebox{0.6}{\tikzfig{doubledilation_phi_thick}} \;\;=\;\; \scalebox{0.6} {\tikzfig{doubledilation_phi_thick_down}}
\]
This vector has the same symmetry as the one resulting from double mixing, but it is not necessarily disentangled. The symmetries introduced by mixing and dilation completely characterise the resulting vectors, which we show in the next section.

\section{The characterising symmetries of double dilation}\label{section:characterisation_twicedilation}

When we consider $|||\Psi\rangle\rangle\rangle$ from equation (\ref{result:doublemix}), there are two ways in which we can turn the vector $|||\Psi\rangle\rangle\rangle$ into an operator, both using the Choi-Jamiolkowski isomorphism:
\begin{align}
 |||\Psi\rangle\rangle\rangle = & \sum_{k=1}^{m}\sum_{i=1}^{n_k}\sum_{j=1}^{n_k} r_k p_{ki} p_{kj} |\phi_{ki}\rangle \overline{|\phi_{ki}\rangle}|\phi_{kj}\rangle \overline{|\phi_{kj}\rangle} \nonumber \\
  \text{operator}_1 = & \sum_{k=1}^{m}\sum_{i=1}^{n_k}\sum_{j=1}^{n_k} r_k p_{ki} p_{kj} |\phi_{ki}\rangle\overline{|\phi_{ki}\rangle} \langle\phi_{kj}| \overline{\langle\phi_{kj}|} \\
  \text{operator}_2 = & \sum_{k=1}^{m}\sum_{i=1}^{n_k}\sum_{j=1}^{n_k} r_k p_{ki} p_{kj} |\phi_{ki}\rangle|\phi_{kj}\rangle \langle\phi_{ki}| \langle\phi_{kj}|
\end{align}
Similarly for the doubly dilated vectors from equation (\ref{result:doublereduce}), for which we give the diagram expressions:
\begin{align}
  \scalebox{0.6}{\tikzfig{dualdensitymatrix1_phi}} \mapsto \;\; & \;\, \scalebox{0.6}{\tikzfig{densitymatrix1}} \;\; = \;\; \scalebox{0.6}{\tikzfig{dualdensitymatrix1_dm1}} \label{dm1} \\
   or\;\; \mapsto\;\; & \scalebox{0.6}{\tikzfig{densitymatrix2}} \;\; = \;\; \scalebox{0.6}{\tikzfig{dualdensitymatrix1_dm2}} \label{dm2}
\end{align}
In both cases, the resulting operator is positive semi-definite and self-adjoint. In other words, the operators are density operators. We call these \emph{the CJ-density operators of} the vector (from Choi-Jamiolkowski). The property of having two CJ-density operators completely characterises vectors resulting from double dilation.

\begin{theorem}\label{thm:characterisation2}
 Let $\phi$ be any normalised vector in any finite Hilbert space. Then $\phi$ has two CJ-density operators $CJ_1$ and $CJ_2$ iff $\phi$ is a result from double dilation.
\end{theorem}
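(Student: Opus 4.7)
The $(\Leftarrow)$ direction is essentially already handled by the diagrammatic calculations displayed just before the theorem: equations~(\ref{dm1}) and~(\ref{dm2}) rewrite both CJ-operators of a doubly dilated vector into the manifestly positive form $g\circ g^{\dagger}$ by spider fusion on the connecting caps, and self-adjointness is evident from the vertical reflection symmetry of the rewritten diagrams. So only the $(\Rightarrow)$ direction requires genuine work.

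My plan for $(\Rightarrow)$ is to use the two positivity conditions in turn and then invoke the purification structural theorem (Proposition~6.80 of~\cite{CoeckeBOOK}) already exploited in the proof of Theorem~\ref{theorem_strictsubspace}. First, positivity of $CJ_1$ together with the diagrammatic characterisation of positive operators (as exactly those of the form $f\circ f^{\dagger}$) and the Choi--Jamiolkowski isomorphism lets me rewrite $\phi$ as a single dilation $\phi = (F\otimes \bar F)\circ \mathrm{cap}_{B}$, where $F$ is a map with two external output wires (the first pair of wires of $\phi$) plus one auxiliary wire labelled $B$. Next, in this new representation, bending the wires that define $CJ_2$ expresses $CJ_2$ as the partial trace over $B$ of $\tilde F\circ\tilde F^{\dagger}$, where $\tilde F$ is $F$ with one output wire bent to an input. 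The hypothesis that $CJ_2$ is positive, combined with this partial-trace expression, then allows me to apply Proposition~6.80 to conclude that $\tilde F$ further factorises, introducing a second auxiliary wire $C$ and a three-system vector $\phi_{ABC}$. Unbending and collecting the cap structure, $\phi$ becomes four copies of $\phi_{ABC}$ with caps on $B$ (inherited from Step~1) and caps on $C$ (introduced by the second purification), which is exactly the double-dilation diagram.

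The main obstacle will be the middle step: carefully tracking which wires are inputs, outputs, and bent as positivity of $CJ_2$ is translated from the ``diagonal'' splitting of the four wires of $\phi$ back to a structural condition on $F$ inside the Step~1 representation, and verifying that the second auxiliary wire introduced by purification can be identified with the $C$-system of a single $\phi_{ABC}$ rather than an independent system. Spider fusion, together with the conjugate-symmetry forced by $\phi$ living in the image of a doubling construction, should make this identification canonical once the diagrammatic bookkeeping is done correctly.
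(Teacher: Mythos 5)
Your $(\Leftarrow)$ direction and your first move in $(\Rightarrow)$ --- using positivity of $CJ_1$ to write $CJ_1 = f\circ f^{\dagger}$ and hence, after bending the inputs back up, to split $\phi$ into a box and its conjugate joined by a cap on an auxiliary wire --- both agree with the paper. The gap is in your second step. Proposition~6.80 of \cite{CoeckeBOOK} has as hypothesis that the operator obtained by \emph{discarding} an output is \emph{pure}, i.e.\ of the form $\Phi\otimes\overline{\Phi}$, and as conclusion that the undiscarded operator $\otimes$-separates into that pure part and a leftover state. Neither fits your situation: the theorem's hypothesis only gives you that $CJ_2$ is positive (of the form $g\circ g^{\dagger}$), which is strictly weaker than pure, so the proposition does not fire; and even if it did, its conclusion is a tensor separation, whereas the four copies of $\phi_{ABC}$ in a doubly dilated vector are all connected by the two nested systems of caps --- nothing separates. (In the paper, Proposition~6.80 is invoked only in Theorem~\ref{theorem_strictsubspace}, precisely to derive an absurd separation of the identity.) Your closing appeal to ``conjugate-symmetry forced by $\phi$ living in the image of a doubling construction'' is also circular, since that membership is exactly what is being proved.

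What the second step actually requires is a second application of the same spectral decomposition you used in the first step, not a purification lemma. Concretely, the paper first exploits the fact that $CJ_1$ and $CJ_2$ are two wire-bent presentations of the same $\phi$ to show, by a purely topological deformation, that $CJ_1$ (and hence the $f$ in $CJ_1 = f\circ f^{\dagger}$) is self-conjugate; this is where the existence and positivity of $CJ_2$ enters. It then defines a wire-bent map $f'$ out of $f$ and checks (equation~\ref{fprimeisselfadj}) that self-conjugacy of $f$ makes $f'$ self-adjoint as well as positive, so that $f' = h\circ h^{\dagger}$ for a further map $h$. Unbending splits $f$ into two copies of $h$ joined by a new auxiliary wire, hence $CJ_1$ into four copies of $h$ in exactly the rainbow configuration of double dilation. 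If you replace your Proposition~6.80 step by this ``establish self-conjugacy, then decompose the bent $f$ again'' argument, your outline becomes the paper's proof.
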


\begin{proof}
We already argued that any result from double dilation has two CJ-density operators, so we only need to show the other direction: every normalised vector $\phi$ in a finite Hilbert space that has two CJ-density operators is the result from double dilation.

Suppose we are given a vector $\phi$
\[
\scalebox{0.6}{\tikzfig{4partitestatewithreducedtypes}}
\]
Notice that the information about $CJ_1$ and $CJ_2$ forces the type of $\phi$ to be of form $A\otimes \overline{A}\otimes A \otimes \overline{A}$.
We will now use the fact that an operator $\rho$ is positive semi-definite and self-adjoint iff it can be written as $\rho = f \circ f^{\dagger}$. Applying this  to $CJ_1$ and $CJ_2$, which are both positive semi-definite and self-adjoint, yields:
\[
CJ_1 = \;\; \scalebox{0.6}{\tikzfig{densitymatrix1_phi}} \;\; = \;\; \scalebox{0.6}{\tikzfig{ff_short}} \;\;\;\; \;\;\;\;\;\;\;\;\;\;\;\;\;\;\;\; CJ_2 = \;\; \scalebox{0.6}{\tikzfig{densitymatrix2_phi}} \;\; = \;\; \scalebox{0.6}{\tikzfig{gg_short}}
\]
where, when choosing a basis, $f = U_1 \sqrt{D_1} U_1^{-1}$ and $g = U_2 \sqrt{D_2} U_2^{-1}$, with $U_1$,$U_2$ unitaries and $D_1$, $D_2$ diagonal matrices (for the details on how tho get this form of $f$ and $g$, we refer to the proof in \cite[Theorem 7.2.1.]{Ashoush2015}).

We first show that both $CJ_1$ and $CJ_2$ are self-conjugate. As
\[
CJ_1 = \;\; \scalebox{0.6}{\tikzfig{densitymatrix1}}\;\; = \;\;\scalebox{0.6}{\tikzfig{dm1intermsofdm2}} \;\;=\;\; \scalebox{0.6}{\tikzfig{dm1intermsofdm22}}
\]

We have:
\[
CJ_1 = \;\;\scalebox{0.6}{\tikzfig{ff}}\;\; = \;\;\scalebox{0.6}{\tikzfig{ffintermsofgg}}
\]

Therefore, the conjugate of $CJ_1$ is:
\begin{align}
\overline{CJ_1} &=\;\; \scalebox{0.6}{\tikzfig{ffconjugate}}\;\; =\;\; \scalebox{0.6}{\tikzfig{ffintermsofggconjugate}} \;\;=\text{Pull!} \;\; \scalebox{0.6}{\tikzfig{ffintermsofggturning}}\;\; \text{Pull!} \\
& = \;\; \scalebox{0.6}{\tikzfig{ffintermsofgg}}\;\; =\;\; \scalebox{0.6}{\tikzfig{ff}} \;\;= CJ_1 \label{showingFselfconjugate}
\end{align}
Where we have used the fact that diagrams are equivalent as long as they are topologically the same \cite{CoeckeBOOK}. The diagram in between the `Pull!'s is not very meaningful on its own, but illustrates how to get to the next diagram.
We conclude that $CJ_1$ is self-conjugate. An analogous line of reasoning shows self-conjugateness for $CJ_2$.

We use this fact to mold $CJ_1$ into the right shape. Consider $f'$, defined as:
\[
f' = \;\; \scalebox{0.6}{\tikzfig{fprime}}
\]

$f'$ is positive semi-definite (inherited from $f$), and self-adjoint:

\begin{align}\label{fprimeisselfadj}
f'^{\dagger} = \;\;& \scalebox{0.6}{\tikzfig{fprimedagger}} \;\; = \;\; \scalebox{0.6}{\tikzfig{fprimedaggerconjugate}} \;\; = \;\; \scalebox{0.6}{\tikzfig{fprimedaggerconjugateturning}} \;\; = \;\; \scalebox{0.6}{\tikzfig{fprime}} \;\; = f'
\end{align}

Where in the second step we use that $f$ is self-conjugate, which follows from the fact that $CJ_1$ is so.
With $f'$ being positive semi-definite and self-adjoint, we can write $f'$ as:
\[
f' = \;\; \scalebox{0.6}{\tikzfig{hh_short}}
\]

And therefore we know:
\[
f = \;\; \scalebox{0.6}{\tikzfig{fintermsoffprime}} \;\; = \;\; \scalebox{0.6}{\tikzfig{fintermsofhh}}
\]

Substituting this expression into the equation for $CJ_1$:
\[
CJ_1 = \;\; \scalebox{0.6}{\tikzfig{ff}} \;\; = \;\; \scalebox{0.6}{\tikzfig{ffintermsofhhhh}}
\]

And therefore:
\[
\scalebox{0.6}{\tikzfig{4partitestatewithreducedtypes}} \;\; = \;\; \scalebox{0.6}{\tikzfig{hhhh}} \;\; = \;\; \scalebox{0.6}{\tikzfig{hhhhbendup}} \;\; = \;\; \scalebox{0.6}{\tikzfig{dualdensitymatrix1_notypes}}
\]

Which shows that $\phi$ is doubly dilated.
\end{proof}

So vectors resulting from double dilation are entirely characterised by their property of having two CJ-density operators.

\section{Multiple iterations}\label{section:generalisation}

We generalise mixing and dilation by iterating both constructions not just twice, but any finite number of times. The iterated version of dilation has already been thoroughly studied in Ashoush's Master thesis \cite{Ashoush}. Here, we still give a sketch of the results of each iteration, so we can contrast them with the results from iterated mixing. For mixing, we also just give the resulting diagrams, trusting that the reader can imagine how to generalise the construction given in section \ref{section:mixing} to more than two iterations.

The $0^{th}$ iteration of both constructions is just doing nothing, so we have normal vectors:
\[
\scalebox{0.6}{\tikzfig{phi_state}}
\]
Mixing then turns these vectors into ones of form:
\[
\scalebox{0.6}{\tikzfig{reducedstate_nolabel}}
\]
The same goes for dilation:
\[
\scalebox{0.6}{\tikzfig{reducedstate_nolabel}}
\]
The difference between the two shows in the second iteration. Double mixing is:
\[
\scalebox{0.6}{\tikzfig{doublemixstate_phi}}
\]
And double dilation:
\[
\scalebox{0.6}{\tikzfig{dualdensitymatrix1_phi}}
\]
We iterate both constructions a third time, first mixing:
\[
\scalebox{0.6}{\tikzfig{3mixture_phi}}
\]
and then dilation:
\[
\scalebox{0.6}{\tikzfig{cpm3state_phi}}
\]

In general, the $n^{th}$ iteration of mixing takes the result from the previous iteration and tensors it with its conjugate. Then, all the resulting boxes are connected to a single new spider with $2^n$ legs.
\[
\scalebox{0.8}{\tikzfig{general_mixing}}
\]
The $n^{th}$ iteration of dilation also takes the tensor product of the previous iteration and its conjugate, but then makes $2^{n-1}$ nested connections (a rainbow), connecting each box from the last iteration to its counterpart in the conjugate half.
\[
\scalebox{0.8}{\tikzfig{general_cpm}}
\]

\subsection{Always a strict subspace}
In every iteration except for the first, mixing yields a proper subspace of dilation. This emphasises again that mixing and dilation are two non-equivalent constructions.
\begin{theorem}
 For all $n \geq 2$, $n$ iterations of mixing yields a proper subspace of the result from $n$ iterations of dilation.
\end{theorem}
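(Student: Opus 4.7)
The plan is to verify the two halves of the claim — that $n$-iterated mixing is contained in $n$-iterated dilation, and that the containment is strict — by generalizing the $n=2$ case handled in Sections \ref{section:notthesame} and \ref{section:subspace}.

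For the inclusion direction, I would generalize the spider-fission argument from the proof of Theorem \ref{theorem_subspace}. In the $n$-mixing diagram, a single classical spider with $2^n$ legs connects all $2^n$ copies of $\phi$ (and its conjugates), while the $n$-dilation diagram connects them via a rainbow of $2^{n-1}$ nested cups. Applying spider fission repeatedly to the central $2^n$-legged spider decomposes it into a binary tree of $2$-legged spiders whose leaves are arranged in exactly the rainbow pattern of the $n$-dilation diagram. Each leaf cup can then be absorbed into its adjacent pair of boxes via a redefinition $\phi \mapsto \phi'$, so the resulting diagram is precisely an $n$-dilation of $\phi'$. This gives the inclusion for every $n\geq 2$.

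For strictness, I would argue by induction on $n$, with base case $n=2$ supplied by Theorem \ref{theorem_strictsubspace}. For the inductive step, suppose there is an $(n-1)$-dilation $\psi$ that is not an $(n-1)$-mixture, and consider the $n$-dilation whose input vector is chosen so that the innermost part of the rainbow reproduces $\psi$ while the outer layer is ``trivial'' (e.g.\ a product of caps). If this $n$-dilation were also an $n$-mixture, I would collapse the outer layer — by composing with suitable basis effects on the outermost pairs of wires, or by discarding — and verify that the result is simultaneously an honest $(n-1)$-mixture and equals $\psi$, contradicting the inductive hypothesis.

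The main obstacle will be making the collapsing step diagrammatically precise: I need to check that composing an $n$-mixing diagram with the chosen effects really yields an $(n-1)$-mixing diagram (and likewise for dilation), so that the hypothesis transfers cleanly between layers. An alternative, more direct route — avoiding induction altogether — is to generalize the cap-cup counterexample of Theorem \ref{theorem_strictsubspace}: take $\phi$ so that the rainbow cups assemble into a nested composition of identities, yielding an $n$-dilation whose Choi-Jamiolkowski image cannot be expressed in the central-spider form characteristic of $n$-mixing, since such a form would force a $\circ$-separation that the topology of the rainbow forbids.
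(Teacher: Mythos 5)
Your proposal is essentially the paper's own proof: the paper disposes of this theorem in one sentence, asserting that the proofs of Theorems \ref{theorem_subspace} and \ref{theorem_strictsubspace} ``easily generalise'' to $n$ iterations, and your inclusion argument (fission the $2^n$-legged spider into smaller spiders sitting on the rainbow arcs, then absorb them into redefined boxes $\phi'$ carrying extra index wires, exactly as in the appendix's bra-ket version where $\phi'$ acquires the factors $|e_i^B\rangle|e_k^C\rangle$) is precisely the intended generalisation of Theorem \ref{theorem_subspace}. One small caution there: the fissioned spiders do not become disconnected cups --- they remain joined to one another --- so the connecting wires must also be absorbed into the enlarged boxes; your phrasing ``binary tree of $2$-legged spiders'' glosses over this but your absorption step handles it. For strictness, your \emph{alternative} route (generalise the cap-cup counterexample: choose $\phi$ so the rainbow assembles into nested identities, and show the $n$-mixing form forces a $\circ$-separation of the identity through the discarded spider leg) is what the paper means and is the route you should take as primary. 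Your induction route has the gap you yourself flag, and it is a real one: composing the conjugate half of an $n$-dilation with product basis effects yields a linear combination of the components of the inner $(n-1)$-dilation rather than that vector itself, and discarding legs of the outermost $2^n$-legged mixing spider does not obviously leave a diagram of the $(n-1)$-mixing shape (the surviving inner spiders and caps must be re-identified as an honest $(n-1)$-mixture of some new family $\{\phi''_{ki}\}$). Since the direct counterexample avoids all of this, prefer it.
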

\begin{proof}
The proofs of Theorems \ref{theorem_subspace} and \ref{theorem_strictsubspace} easily generalise to $n$ iterations instead of two.
\end{proof}

\subsection{More symmetry}\label{section:general_symmetry}

Vectors resulting from double dilation were characterised by having two CJ-density operators. This suggests that those resulting from $n$ iterations of dilation are precisely those that have $n$ CJ-density operators, capturing the extra symmetry introduced by each iteration of dilation.

\begin{theorem}\label{thm:characterisation_n}
Let $\phi$ be any vector in a finite Hilbert space. Then $\phi$ has $n$ CJ-density operators
$CJ_1, \ldots, CJ_n$ iff $\phi$ is the result of $n$ iterations of dilation.
\end{theorem}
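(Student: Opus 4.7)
The plan is to prove this biconditional by induction on $n$, with Theorem~\ref{thm:characterisation2} furnishing the base case $n=2$ (the case $n=1$ reduces to the standard fact that a bipartite vector is the Choi--Jamiolkowski image of a density operator iff it arises from dilation, i.e.\ from a purification).

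For the forward implication, I would inspect the general $n$-iteration dilation diagram displayed in Section~\ref{section:generalisation} directly. The rainbow has $n$ nested arcs, and for each arc one can apply the Choi--Jamiolkowski isomorphism to bend down the $2^{n-1}$ wires on one side of that arc, turning the vector into an operator. Since the conjugate half of the rainbow mirrors the original half at that level, the resulting operator manifestly has the form $f_i \circ f_i^{\dagger}$ (where $f_i$ is the sub-diagram on one side of the arc) and is therefore positive semi-definite and self-adjoint. This yields exactly $n$ CJ-density operators, one per arc.

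For the backward implication I would argue inductively. Assume the result holds for $n-1$, and suppose $\phi$ has $n$ CJ-density operators $CJ_1,\ldots,CJ_n$. Mirroring the strategy of Theorem~\ref{thm:characterisation2}, I would first write $CJ_1 = f_1 \circ f_1^{\dagger}$ by positive semi-definiteness, use the compatibility of $CJ_1$ with $CJ_2$ to deduce self-conjugateness of $CJ_1$ via the topological \emph{Pull!} manoeuvre of equation~\eqref{showingFselfconjugate}, define the auxiliary map $f_1'$ that is both positive semi-definite and self-adjoint, and extract its square root $f_1' = h_1 \circ h_1^{\dagger}$. Substituting back exposes one outer layer of the $n$-iteration rainbow. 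The inner portion is then a vector $\phi'$ in a smaller Hilbert space which, I would argue, inherits $n-1$ CJ-density operators from $CJ_2,\ldots,CJ_n$. By the inductive hypothesis, $\phi'$ is $(n-1)$-iteration dilated, and re-attaching the outer $h_1 \otimes \overline{h_1}$ layer puts $\phi$ in $n$-iteration dilated form.

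The main obstacle is the inductive step: verifying that $CJ_2,\ldots,CJ_n$ genuinely descend to CJ-density operators on the reduced vector $\phi'$ after the outer layer has been peeled off. This amounts to checking that the self-adjointness and self-conjugateness forced by each $CJ_i$ are compatible with the decomposition induced by $CJ_1$, so that peeling off one layer does not break the remaining symmetries. The diagrammatic topological-equivalence arguments (\emph{Pull!}) used for $n=2$ should generalise, but the bookkeeping of $n$ simultaneous positive-semidefiniteness and self-conjugateness conditions is delicate, and one must order the successive decompositions carefully so that each extracted square root preserves the properties inherited from the earlier ones.
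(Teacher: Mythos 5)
Your forward direction matches the paper's (each nested arc of the rainbow, read through the Choi-Jamiolkowski isomorphism, exhibits the corresponding operator in the form $f \circ f^{\dagger}$, hence positive and self-adjoint), and your instinct to induct with Theorem~\ref{thm:characterisation2} as the base case is also the paper's. But your backward implication is organised differently from the paper's, and the difference sits exactly where your acknowledged gap is. You peel one layer off $CJ_1$, package the square-root factor $h_1$ as a reduced vector $\phi'$, and invoke the induction hypothesis on $\phi'$; for that you must first prove that $\phi'$ has $n-1$ CJ-density operators. You flag this as ``the main obstacle'' but do not resolve it, and it is a genuine gap rather than routine bookkeeping: $h_1$ is determined by $CJ_1 = h_1 \circ h_1^{\dagger}$ only up to composition with an isometry, so $\phi'$ is not even canonically defined, and transporting the positivity and self-adjointness of $CJ_2, \ldots, CJ_n$ (which are statements about wire-bendings of the original $\phi$) into statements about wire-bendings of $\phi'$ is essentially the whole difficulty of the level-$(n-1)$ statement. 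As written, the inductive step assumes a hypothesis about $\phi'$ that is not derived from what you know about $\phi$.

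The paper avoids this by never recursing on a reduced vector. It writes $CJ_1 = f_n^{\dagger} \circ f_n$ and then keeps working with the operators themselves: at each stage it forms a wire-bent version $f_i'$ of $f_i$ (the bending that relates $CJ_{i-1}$ to $CJ_i$), establishes that $f_i'$ is self-adjoint and positive by the same self-conjugateness/``Pull!'' argument as in Theorem~\ref{thm:characterisation2} --- an argument that only ever refers to the CJ-density operators of the original $\phi$ --- and factors $f_i' = f_{i-1}^{\dagger} \circ f_{i-1}$, continuing down to $f_1$. Induction enters only at the end, to verify the combinatorics: the sub-procedure producing $f_{n-1}, \ldots, f_1$ in case $n$ is literally the case-$(n-1)$ procedure with every wire doubled, which is what guarantees that the $2^n$ copies of $f_1$ assemble into the correct nested rainbow. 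If you want to keep your ``peel and recurse on $\phi'$'' structure, you need an additional lemma showing that the square root $h_1$ can be chosen so that $CJ_2, \ldots, CJ_n$ descend to CJ-density operators of $\phi'$; the paper's reorganisation is, in effect, a way of never having to prove that lemma.
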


The proof is by induction. Theorem \ref{thm:characterisation2} provides the base case, the rest of the induction is included in the appendix. Note that the fact that $\phi$ has $n$ CJ-density operators immediately implies that its type is of form $(A\otimes\overline{A})^{2^{n-1}}$, that is, it is a vector in Hilbert space $(A\otimes\overline{A})^{2^{n-1}}$.

Of course, as mixing always yields a subspace of dilation, vectors resulting from $n$ iterations of mixing also have the extra symmetry properties. They stay disentangled in the way discussed in section \ref{section:characterisation_twicemix}.

\section{Discussion and outlook}

Although in the physics community it us usually assumed that dilation and mixing are one and the same thing, this is clearly not the case.  The heart of our result can simply be depicted as:
\begin{center}
  \begin{tabular}{ccc}
    \scalebox{0.6}{\tikzfig{general_spider_8legs}} & \raisebox{5mm}{$\neq$} & \scalebox{0.6} {\tikzfig{general_rainbow}} \\
      \multicolumn{3}{c}{\small spiders are not rainbows} \\
  \end{tabular}
\end{center}
That is, a convex sum over pure operators (mixing, yielding a spider diagram) is not the same as a partially traced out composite system (dilation, yielding a rainbow diagram), even though both constructions happen to coincide in the case of density operators (i.e.~the result of applying them only once), since:
\begin{center}
  \begin{tabular}{ccc}
    \scalebox{0.6}{\tikzfig{2_spider_2legs}} & \raisebox{2.5mm}{$=$} & \scalebox{0.6}{\tikzfig{2_rainbow}} \\
  \end{tabular}
\end{center}

In physics, this result may impact axiomatic understanding of density matrices, and may also contribute to either crafting interesting toy theories, or adjoining extra variables to theories.

In NLP, it is worth considering which of the two, double mixing and double dilation, could serve as a model for both ambiguity and lexical entailment. Notice that in dictionaries, disambiguation of words is always first by hypernym, then by entailment. If this order is something that the model should reflect, then double mixing is a good choice: the asymmetry is reflected by the spiders appearing in the mixtures, causing a clear distinction between the first and second iterations of mixing. If however, this order of disambiguation in dictionaries is considered artificial, then the more general double dilation might be the preferred option.

The second result presented in this paper is the characterisation of both constructions. Dilation yields vectors that have $n$ CJ-density operators, which nicely exposes the symmetries introduced by the construction. Mixtures on the other hand, while having the same symmetries as dilated vectors, are special cases of disentangled states. This actually comes as no surprise: mixtures are almost by definition impure things. For future research, it would be ideal to find a characterisation for vectors resulting from double dilation that are \em not \em the result of double mixing.

As we mentioned in section \ref{section:diagram_notation}, the results in this paper apply in a more general setting than finite Hilbert spaces. To be precise, they hold in any \em spider category \em (dagger compact closed category with a Frobenious structure). One such category is the category of sets and relations (Rel). Oscar Cunningham and Dan Marsden have looked into the application of iterated dilation to the states in Rel\cite{Marsden15,Cunningham-privite}. In ongoing research, we are now applying iterated mixing to Rel as well. Hopefully, this will give us some hints about vectors resulting from double dilation but not from double mixing.

\bibliography{library}{}
\bibliographystyle{plain}

\appendix

\section{Proofs and proof sketches}

\subsection*{From section \ref{section:subspace}}

We sketch the proof of Theorem \ref{theorem_subspace} in terms of bras and kets.
Consider a vector resulting from double mixing:
\[
|||\Psi\rangle\rangle\rangle = \sum_{k=1}^{m}\sum_{i=1}^{n_k}\sum_{j=1}^{n_k} r_k p_{ki} p_{kj} |\phi_{ki}\rangle \overline{|\phi_{ki}\rangle}|\phi_{kj}\rangle \overline{|\phi_{kj}\rangle}
\]
In order to compare $|||\Psi\rangle\rangle\rangle$ to vectors that resulting from double dilation, we choose two Hilbert spaces $B,C$ such that $\dim(C) = m$ and $\dim(B)= N = \max_k(n_k)$, and choose orthonormal bases $\{e_i^B\}$ and $\{e_k^C\}$ for $B$ and $C$ respectively. We then define $|\phi'_{ki}\rangle$ for each $k \leq m$ and $i \leq N$ as follows:
\[
|\phi'_{ki}\rangle = \begin{cases}
                        \sqrt[4]{r_k}\sqrt{p_i} |e^B_i\rangle |e^C_k\rangle |\phi_{ki}\rangle, & \mbox{if $i \leq n_k$ ($|\phi_{ki}\rangle$ exists)} \\
                        0 |e^B_i\rangle|e^C_k\rangle, & \mbox{otherwise}.
                      \end{cases}
\]
so that $\langle e^C_k|\langle e^B_i|\phi'_{ki}\rangle = \sqrt[4]{r_k}\sqrt{p_i}|\phi_{ki}\rangle$ (or 0 if $|\phi_{ki}\rangle$ does not exist), and hence:
\[
|||\Psi\rangle\rangle\rangle = \sum_{k=1}^{m}\sum_{i=1}^{N}\sum_{j=1}^{N} \langle e^C_k|\langle e^B_i|\phi'_{ki}\rangle\overline{\langle e^C_k|\langle e^B_i|\phi'_{ki}\rangle}\langle e^C_k|\langle e^B_j|\phi'_{kj}\rangle \overline{\langle e^C_k|\langle e^B_j|\phi'_{kj}\rangle}
\]
Next, we define:
\[
|\phi'_{ABC}\rangle = \sum_{k=1}^{m} \sum_{i=1}^{N} |\phi'_{ki}\rangle
\]
Because the bases $\{e_k^C\}$ and $\{e_i^B\}$ are orthonormal, we still have $\langle e^C_k|\langle e^B_i|\phi'_{ABC}\rangle = |\phi'_{ki}\rangle$ and hence:
\[
|||\Psi\rangle\rangle\rangle = \sum_{k=1}^{m}\sum_{i=1}^{N}\sum_{j=1}^{N} \langle e^C_k|\langle e^B_i|\phi'_{ABC}\rangle\overline{\langle e^C_k|\langle e^B_i|\phi'_{ABC}\rangle}\langle e^C_k|\langle e^B_j|\phi'_{ABC}\rangle \overline{\langle e^C_k|\langle e^B_j|\phi'_{ABC}\rangle}
\]

We apply a final tweak using the Dirac delta to arrive at an expression matching that of a doubly dilated vector:
\begin{align}\label{result:doublemix_subset_doublereduce}
 |||\Psi\rangle\rangle\rangle = & \sum_{k=1}^{m}\sum_{l=1}^{m}\sum_{i=1}^{N}\sum_{j=1}^{N} \delta_{kl} \langle e^C_k|\langle e^B_i|\phi'_{ABC}\rangle\overline{\langle e^C_l|\langle e^B_i|\phi'_{ABC}\rangle}\langle e^C_l|\langle e^B_j|\phi'_{ABC}\rangle \overline{\langle e^C_k|\langle e^B_j|\phi'_{ABC}\rangle}
\end{align}

Therefore, every vector resulting from double mixing can be written as one resulting from double dilation, with a restriction given by the Dirac delta. We conclude that doubly mixed vectors from a strict subspace of doubly dilated vectors.

\subsection*{From section \ref{section:general_symmetry}}

We give the proof of Theorem \ref{thm:characterisation_n}:\\
Let $\phi$ be any vector in a finite Hilbert space. Then $\phi$ has $n$ CJ-density operators
$CJ_1, \ldots, CJ_n$ iff $\phi$ is the result of $n$ iterations of dilation.

\begin{proof}
As in Theorem \ref{thm:characterisation2}, the symmetries of vectors resulting from $n$ iterations of dilation take care of the $\Leftarrow$ direction, so we only need to show the $\Rightarrow$ direction. The proof depends heavily on the Choi-Jamiolkowski isomorphism. As the proof is diagrammatic, this is referred to as `wire bending'.

Let $\phi$ be any vector that has CJ-density operators $CJ_1 \ldots CJ_n$. This immediately implies that $\phi$ must be of type $(A \otimes \overline{A})^{n-1}$. We use the following systematic enumeration of all the CJ-density operators:
\[
CJ_1 = \;\; \scalebox{0.5}{\tikzfig{cpmn_densitymatrix1}} \;\;\;\; CJ_2 = \;\;  \scalebox{0.5}{\tikzfig{cpmn_densitymatrix2}} \;\; \ldots \;\; CJ_i = \;\; \scalebox{0.5}{\tikzfig{cpmn_densitymatrix_generalcase}}
\]
The general pattern is: $CJ_i$ is obtained from $\phi$ by executing the following procedure: 1) Bend down the first bundle of $2^{n-i}$ outputs. 2) Leave the next $2^{n+1-i}$ outputs up. 3) Bend down the next bundle of $2^{n-i}$ outputs and make sure that the whole bundle ends up to the right of the already existing inputs. 4) Repeat step 3. 5) Repeat step 2 - 4 until you have considered all outputs of $\phi$.
\vskip 2ex
Shaping $\phi$ into a vector resulting from $n$ iterations of dilation consists of two main parts. First we break $\phi$ into $2^n$ pieces and then we put the pieces back together so that they form the right symmetries. We describe both procedures step by step:
\begin{enumerate}
  \item Breaking $\phi$ apart:
     \begin{enumerate}
       \item Consider $CJ_1$, the density operator formed by bending down the first $2^{n-1}$ outputs of $\phi$. As this is a density operator, write it as $f^{\dagger}_n \circ f_n$:
           \[
           \scalebox{0.6}{\tikzfig{cpmn_densitymatrix1}} \;\; = \;\; \scalebox{0.8}{\tikzfig{cpmn_phi1}}\;\; = \;\;\scalebox{0.6}{\tikzfig{cpmn_fnfndagger}}
           \]
      \item Define $f'_n$ as follows (see diagram). Start with $f$ and divide the in- and outputs of $f$ in groups of four. For every such group of four, bend the outer two wires of the inputs up and the inner two wires of the outputs down, as illustrated in the first diagram below. Notice that this is the same bending of wires that would be needed to form $CJ_{n-1}$ into $CJ_n$. This will yield a map that is self-conjugate. Then, bend the first half of the inputs of this resulting map upwards, and the second half of the outputs of this map downwards, to create a self-adjoint map (the second step in the diagram below). This is $f'_n$.

          \begin{align*}
            \scalebox{0.6}{\tikzfig{cpmn_fn}}\;\; & \mapsto \;\; \scalebox{0.6}{\tikzfig{cpmn_fnprime_step1}}\\
             & \mapsto \;\;\ \scalebox{0.6}{\tikzfig{cpmn_fnprime_step2}} \\
             & = \;\; \scalebox{0.6}{\tikzfig{cpmn_fnprime}}
          \end{align*}

          The fact that $f'_n$ is self-adjoint follows from the same line of reasoning used in equation \ref{fprimeisselfadj} in Theorem \ref{thm:characterisation2}.

      \item Since $f'_n$ is self-adjoint, write it as $f^{\dagger}_{n-1} \circ f_{n-1}$.

          \[
          \scalebox{0.6}{\tikzfig{cpmn_fnprime}} \;\; = \;\; \scalebox{0.6}{\tikzfig{cpmn_fnmin1fnmin1dagger}}
          \]

      \item Define $f'_{n-1}$ similarly to $f'_n$, now using the same wire-bending that would be needed to turn $CJ_{n-2}$ into $CJ_{n-1}$.
      \item Since $f'_{n-1}$ is self-adjoint, write it as $f^{\dagger}_{n-2} \circ f_{n-2}$.
      \item Repeat the previous two steps to create $f_{n-3}, f_{n-4}, \ldots$ until you have $f_2$. Every $f'_{n-i}$ is obtained from $f_{n-i}$ using the wire-bending that would turn $CJ_{n-(i+1)}$ into $CJ_{n-i}$.
      \item Define $f'_2$ by only bending the first half of the inputs of $f_2$ upwards, and the second half of the outputs of $f_2$ downwards. This is again a self-adjoint map.
      \item Since $f'_2$ is self-adjoint, write it as $f^{\dagger}_1 \circ f_1$.
    \end{enumerate}
  \item We can now build $\phi$ from $2^n$ copies of $f_1$. From the procedure above, we have the following equations:
      \begin{align*}
        CJ_1 = & f^{\dagger}_n \circ f_n \\
        f'_n = & \text{ wire-bent version of } f_n \\
        f'_n = & f^{\dagger}_{n-1} \circ f_{n-1} \\
        f'_{n-1} = & \text{ wire-bent version of } f_{n-1} \\
         & \vdots \\
        f'_2 = & \text{ wire-bent version of } f_2 \\
        f'_2 = & f^{\dagger}_1 \circ f_1
      \end{align*}
      By just reversing the wire-bending parts (i.e., expressing $f_2$ as a wire-bent version of $f'_2$ etc), we get a chain of equations that we can substitute into each other, eventually expressing $CJ_1$ entirely in terms of $f_1$:
      \begin{align*}
      f'_2 = & f^{\dagger}_1 \circ f_1 \\
      f_2 = & \text{ wire-bent version of } f'_2 \\
      f_3' = & f^{\dagger}_2 \circ f_2 \\
      & \vdots \\
      f_{n-1} = & \text{ wire-bent version of } f'_{n-1} \\
      f'_n = & f^{\dagger}_{n-1} \circ f_{n-1} \\
      f_n = & \text{ wire-bent version of } f'_n \\
      CJ_1 = & f^{\dagger}_n \circ f_n
      \end{align*}

      Then, all that is left to do is form $CJ_1$ back into $\phi$, which is done by bending all the wires up and carefully arranging the $2^n$ copies of $f_1$. The result is that we have written $\phi$ as a vector resulting from $n$ iterations of dilation.
\end{enumerate}

We prove correctness of this procedure by induction. In the case of double dilation, the procedure coincides with the proof given in Theorem \ref{thm:characterisation2}, which provides the basis for the induction. We may hence assume that the procedure correctly shows that for a vector $\phi$, positive semi-definiteness and self-adjointness of $CJ_1 \ldots CJ_{n-1}$ implies that $\phi$ is a vector resulting from $n-1$ iterations of dilation (we refer to this as `case $n-1$'). We need to show that if in addition $CJ_n$ is positive semi-definite and self-adjoint, then the procedure shows that $\phi$ is the result from $n$ iterations of dilation (`case $n$').

By the induction hypothesis, we may assume that $f_{n-1}$ as defined in the procedure for case $n-1$ consists of $2^{n-2}$ copies\footnote{recall that for the final result, we need $f^{\dagger}_{n-1} \circ f_{n-1}$ to consist of $2^{n-1}$ copies of $f_1$, hence $f_{n-1}$ itself consists of $2^{n-2}$ copies.} of $f_1$, and each copy has an output wire that is either an in- or output of $f_{n-1}$.

Notice that when we execute the procedure for case $n$, the definitions of $f_{n-1}, \ldots, f_1$ are identical to the definitions of $f_{n-1}, \ldots, f_1$ in the procedure of case $n-1$, as are the expressions of $f_2, \ldots, f_{n-1}$ in terms of $f_1$. The difference between the two procedures is the type that we assume for the in- and outputs of these morphisms. In case $n-1$, we may assume that $\phi$ has type $(A \otimes \overline{A})^{n-1}$, whereas in case $n$, it has type $(A \otimes \overline{A})^{n}$. The effect of this difference in types in that every wire drawn in a diagram from the procedure of case $n-1$ can be replaced by a \emph{pair} of wires to get the same diagram in the procedure of case $n$. This is illustrated below for $f_{n-1}$.

\[
\scalebox{0.8}{\tikzfig{cpmn_case_nmin1_to_case_n}}
\]

So, in case $n$, $f_{n-1}$ consists of $2^{n-2}$ copies of $f_1$, and each copy has a pair of output wires that are either both inputs or both outputs of $f_{n-1}$. $f'_n$, which equals $f^{\dagger}_{n-1} \circ f_{n-1}$, turns the $2^{n-2}$ copies of $f_1$ into $2^{n-1}$ copies, keeping the pais of wires together.

Now comes the magic.
The transformation from $f'_n$ to $f_n$ splits the pairs of wires, causing the first half of the pair to become an output in $f_n$, and the second half of the pair to become an input of $f_n$. As $CJ_1 = f^{\dagger}_n \circ f_n$, the outputs of $f_n$ are connected to the inputs of $f^{\dagger}_n$ and form the rainbow $C_n$ in the final result. Because of the symmetry between $f_n$ and its adjoint $f^{\dagger}_n$, the individual rainbow arcs connect corresponding copies of $f_1$ (see Figure \ref{illustration_of_connections} below). The inputs of $f_n$ and the outputs of $f^{\dagger}_n$ become the output wires of the final result. The result hence has the structure of a vector resulting from $n$ iterations of dilation, with every output of $\phi$ coming from a different copy of $f_1$.

\begin{figure}
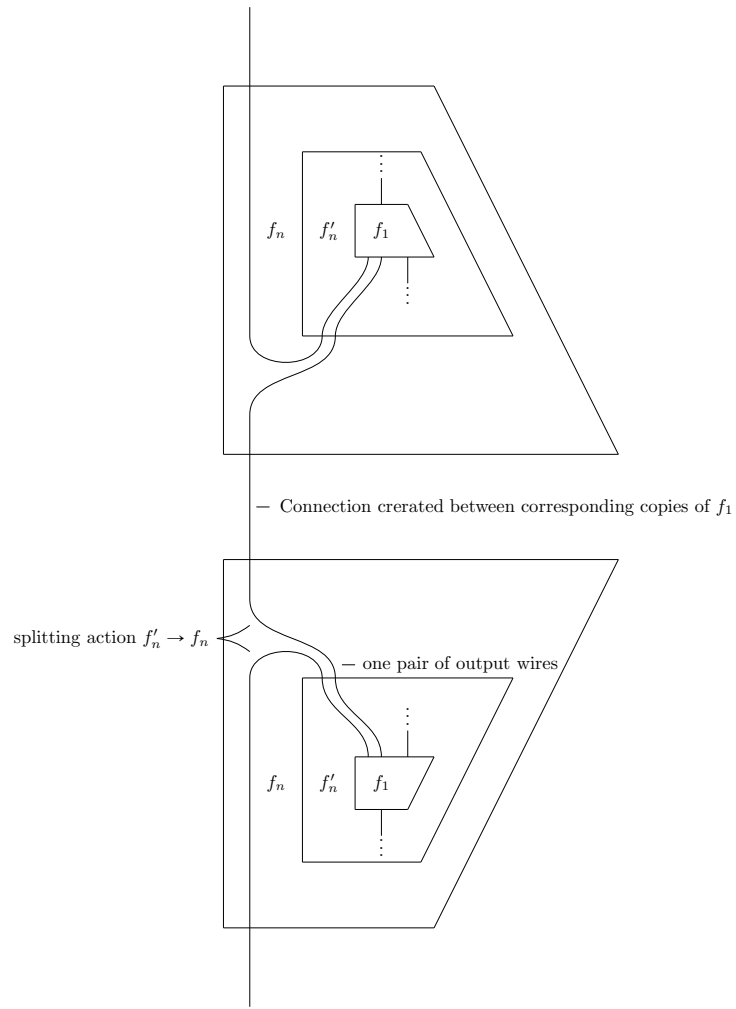

  \centering
  \[
   \scalebox{0.7}{\tikzfig{cpmn_connectionf1tof1inresult}}
  \]
  \caption{Illustration of the connections made between the copies of $f_1$ in $f_n$ and $f^{\dagger}_n$.}\label{illustration_of_connections}
\end{figure}
\end{proof}

\end{document}